\def\citep{\cite}
\newtheorem{theorem}{Theorem}
\newtheorem{lemma}[theorem]{Lemma}
\newtheorem{corollary}[theorem]{Corollary}
\newtheorem{claim}[theorem]{Claim}
\newtheorem{definition}{Definition}
\theoremstyle{nonumberplain}\theoremsymbol{\ensuremath{\Box}}
\newtheorem{proof}{Proof.}
\theoremstyle{empty}
\newcommand{\union}{\cup}
\newcommand{\beq}{\begin{eqnarray}}
\newcommand{\eeq}{\end{eqnarray}}
\newcommand{\beqn}{\begin{equation}}
\newcommand{\eeqn}{\end{equation}}
\tikzstyle{decision} = [diamond,
\tikzstyle{block} = [rectangle,
\tikzstyle{line} = [draw,
\tikzstyle{cloud} = [draw,
\begin{document} 

\title{Minimum Weight Perfect Matching\\via Blossom Belief Propagation}

\author{Sungsoo Ahn, $\quad$ Sejun Park, $\quad$  Michael Chertkov\thanks{M.\ Chertkov is with the Theoretical Division at Los Alamos National Laboratory, USA.
Author's e-mail: \texttt{chertkov@lanl.gov}}, $\quad$  Jinwoo Shin\thanks{S.\ Ahn, S.\ Park and J.\ Shin are with the Department of Electrical Engineering at Korea Advanced Institute of Science Technology, Republic of Korea.
Authors' e-mails: \texttt{ssahn0215@kaist.ac.kr, sejun.park@kaist.ac.kr, jinwoos@kaist.ac.kr}
}}


\maketitle




\begin{abstract}
Max-product Belief Propagation (BP) is a popular message-passing algorithm for 
computing a Maximum-A-Posteriori (MAP) assignment over a distribution represented 
by a Graphical Model (GM). It has been shown that BP can solve a number of combinatorial 
optimization problems including minimum weight matching, shortest path, network flow and vertex cover
under the following common assumption: the respective Linear Programming (LP) relaxation is tight, i.e., 
no integrality gap is present.  However, when LP shows an integrality gap,
no model has been known 
which can be solved systematically via sequential applications of BP.
In this paper, we develop the first such algorithm, coined Blossom-BP, for
solving the minimum weight matching problem over arbitrary graphs. 
Each step of the sequential algorithm requires applying BP 
over a modified graph constructed by contractions and expansions of blossoms, i.e., odd sets of vertices. 
Our scheme guarantees termination in $O(n^2)$ of BP runs, where $n$ is the 
number of vertices in the original graph. In essence, the Blossom-BP offers a distributed version of the celebrated 
Edmonds' Blossom algorithm by jumping at once over many sub-steps with a single BP.
Moreover, our result provides an interpretation of the Edmonds' algorithm as a sequence of LPs.
\end{abstract}

\vspace{-0.1in}
\section{Introduction}
Graphical Models (GMs) provide a useful representation for reasoning in a number of scientific disciplines \cite{05YFW,08RU,09MM,08WJ}. Such models use a graph structure to encode the joint probability distribution, where vertices correspond to random variables and edges specify conditional dependencies. An important inference task in many applications involving GMs is to find the most-likely assignment to the variables in a GM, i.e., Maximum-A-Posteriori (MAP). Belief Propagation (BP) is a popular algorithm for approximately solving the MAP inference problem and it is an iterative, message passing one 
that is exact on tree structured GMs. BP often shows remarkably strong heuristic performance beyond trees, i.e., over loopy GMs. Furthermore, BP is 
of a particular relevance to large-scale problems due to its potential for parallelization \cite{09GLG} and
its ease of programming within the modern programming models for parallel computing, e.g., GraphLab \cite{10LGKBGH}, GraphChi \cite{12KBC} and OpenMP \cite{20CMDK}.

The convergence and correctness of BP was recently established for a certain class of loopy GM formulations of several classical combinatorial optimization problems, including matching \cite{08BSS,11SMW,07HJ}, perfect matching \cite{11BBCZ}, shortest path \cite{08RT}, independent set \cite{07SSW}, network flow \cite{10GSY} and vertex cover \cite{PS145}. The important common feature of these models is that BP converges to a correct assignment when the Linear Programming (LP) relaxation of the combinatorial
optimization is tight, i.e., when it shows no integrality gap. The LP tightness is an inevitable condition to guarantee the performance of BP and no combinatorial optimization instance has been known where BP would be used 
to solve problems without the LP tightness. On the other hand, in the LP literature, it has been extensively studied how to enforce the LP tightness via solving multiple intermediate LPs that are systematically designed, e.g., via the cutting-plane method \cite{12CLS}. Motivated by these studies, we pose a similar question for BP, ``how to enforce correctness of BP, possibly by solving multiple intermediate BPs''.
In this paper, 
we show how to resolve this question for the minimum weight (or cost) perfect matching problem over arbitrary graphs.


\paragraph{Contribution.}
We develop an algorithm, coined Blossom-BP, for
solving the minimum weight matching problem over an arbitrary graph.
Our algorithm solves multiple intermediate BPs until the final BP outputs the solution.
The algorithm 
is sequential, where each step includes running BP over a `contracted' graph 
derived from the original graph by contractions and infrequent expansions of blossoms, i.e., odd sets of vertices. 
To build such a scheme, we first design an algorithm, coined Blossom-LP,
solving multiple intermediate LPs.
Second, we show that each LP is solvable by BP using 
the recent framework \cite{PS145} that establishes a generic connection between
BP and LP. For the first part, cutting-plane methods solving multiple intermediate LPs for the minimum weight matching problem
have been discussed by several authors over the past decades \cite{78T, 82PR, 85GH, 86LP, 07FL} and a provably polynomial-time scheme was
recently suggested \cite{12CLS}. However, LPs in \cite{12CLS} were quite complex to solve by BP. To address 
the issue, we design much simpler intermediate LPs that allow utilizing the framework of \cite{PS145}.

We prove that Blossom-BP and Blossom-LP guarantee to terminate in $O(n^2)$ of BP and LP runs, respectively,
where $n$ is the number of vertices in the graph.  
To establish the polynomial complexity, we show that
intermediate outputs of Blossom-BP and Blossom-LP
are equivalent to those of
a variation of
the Blossom-V algorithm \cite{K09}
which is the latest implementation of the Blossom algorithm due to Kolmogorov.
The main difference is that Blossom-V updates parameters by maintaining disjoint tree graphs,
while Blossom-BP and Blossom-LP implicitly achieve this by maintaining disjoint cycles, claws and tree graphs.
Notice, however, that these combinatorial structures are auxiliary, as required for proofs,  and they 
do not appear explicitly in the algorithm descriptions.
Therefore, they are 
much easier to implement than Blossom-V that maintains complex data structures, e.g., priority queues.
To the best of our knowledge, Blossom-BP and Blossom-LP are the simplest possible algorithms available for solving the problem in polynomial time.
Our proof implies that
in essence, Blossom-BP offers a distributed version of the Edmonds' Blossom algorithm 
\cite{65Edm}
jumping at once over many sub-steps of Blossom-V with a single BP.

The subject of solving convex optimizations (other than LP) via BP was discussed in the literature \cite{MJW06, WYM07, MR08}.
However, we are not aware of any similar attempts to solve
Integer Programming, via sequential application of BP.
We believe that the approach developed in this paper is of a
broader interest, as it promises to advance the challenge of designing 
BP-based MAP solvers for a broader class of GMs.
Furthermore, Blossom-LP stands alone as providing 
an interpretation for the Edmonds' algorithm 
in terms of a sequence of tractable LPs.
The Edmonds' original LP formulation contains exponentially many constraints,
 thus naturally suggesting  to seek for 
a sequence of LPs, each with a subset of constraints, gradually reducing the integrality gap to zero in a polynomial number of steps.
However, it remained illusive for decades: 
even when the bipartite LP relaxation of the problem has an integral optimal solution,
the standard Edmonds' algorithm keeps contracting and expanding a sequence of blossoms.
As we mentioned earlier, we resolve the challenge by  showing that Blossom-LP is (implicitly) equivalent to a variant of 
the Edmonds' algorithm with 
three major modifications: (a) parameter-update via maintaining cycles, claws and trees,
(b) addition of small random corrections to weights, and 
(c) initialization using the bipartite LP relaxation.

\paragraph{Organization.} In Section \ref{sec:pre}, we provide backgrounds on the minimum weight perfect matching problem
and the BP algorithm. Section \ref{sec:main} describes our main result -- Blossom-LP and Blossom-BP algorithms, where
the proof is given in Section \ref{sec:proof}.

\vspace{-0.05in}
\section{Preliminaries}\label{sec:pre}
\vspace{-0.05in}
\subsection{Minimum weight perfect matching}
Given an (undirected) graph $G=(V,E)$,  
a matching of $G$ is a set of vertex-disjoint edges, where
a perfect matching additionally requires to cover every vertices of $G$.
Given integer edge weights (or costs) $w=[w_e]\in\mathbb{Z}^{|E|}$, 
the minimum weight (or cost) perfect matching problem consists in computing a perfect matching
which minimizes the summation of its associated edge weights. The problem is formulated as the following IP (Integer Programming):
\begin{equation}\label{ip:matching}
\begin{split}
\mbox{minimize}\qquad&w\cdot x\qquad
\mbox{subject to}\qquad\sum_{e\in\delta(v)}x_e=1,\quad\forall v\in V,\qquad x=[x_e]\in\{0,1\}^{|E|}
\end{split}
\end{equation}
Without loss of generality, one can assume that weights are strictly positive.\footnote{If some edges have negative weights,
one can add the same positive constant to all edge weights, and this does not alter the solution of IP \eqref{ip:matching}.}
Furthermore, we assume that IP \eqref{ip:matching} is feasible, i.e., there exists at least one perfect matching in $G$.
One can naturally relax the above integer constraints to $x=[x_e]\in [0,1]^{|E|}$
to obtain an LP (Linear Programming), which is called the bipartite relaxation.
The
integrality of the bipartite LP relaxation is not guaranteed, however it can be enforced by adding the so-called blossom inequalities \cite{K09}:
\begin{equation}\label{lp:edmonds}
\begin{split}
\mbox{minimize}\qquad&w\cdot x\\
\mbox{subject to}\qquad&\sum_{e\in\delta(v)}x_e=1,\quad\forall v\in V,\qquad
\sum_{e\in\delta(S)}x_e\ge 1,\quad\forall S\in\mathcal{L},\qquad x=[x_e]\in[0,1]^{|E|},
\end{split}
\end{equation}
where $\mathcal{L}\subset 2^V$ is a collection of odd cycles in $G$, called blossoms,
and $\delta(S)$ is a set of edges between $S$ and $V\setminus S$.
It is known that if $\mathcal{L}$ 
is the
collection of all the odd cycles in $G$, then LP \eqref{lp:edmonds} always has
an integral solution.
However, notice that the number of odd cycles is exponential in $|V|$, thus solving 
LP \eqref{lp:edmonds} is computationally intractable.
To overcome 
this complication we are looking for a tractable subset of $\mathcal L$ of a polynomial size which guarantees the integrality. 
Our algorithm, searching for such a tractable subset of $\mathcal L$ is iterative: at each iteration it adds or subtracts a blossom.

\subsection{Background on max-product Belief Propagation}

The max-product Belief Propagation (BP) algorithm is a popular heuristic for approximating the MAP assignment in a GM.
BP is implemented iteratively; at each iteration $t$, it maintains four 
messages 
$$\{m^{t}_{\alpha\rightarrow i}(c), 
m^{t}_{i\rightarrow\alpha}(c):
c\in\{0,1\} \}$$
 between
every variable $z_i$ and every associated $\alpha\in F_i$, where
 $F_i:= \{\alpha\in F: i \in \alpha\}$; that is, $F_i$ is
a subset of $F$ such that all $\alpha$ in $F_i$ include the $i^{th}$
position of $z$ for any given $z$.
The messages are updated as follows:
\begin{align}
&\quad m^{t+1}_{\alpha\rightarrow i}(c) ~=~ \max_{z_{\alpha}:z_i=c}
                                   \psi_\alpha (z_{\alpha})
\prod_{j\in \alpha\setminus i} m_{j\rightarrow \alpha}^t (z_j)\label{eq:msg:alpha_to_i}\\
&\quad m^{t+1}_{i\rightarrow\alpha}(c) ~=~
\psi_i(c)\prod_{\alpha^{\prime}\in F_i\setminus \alpha} m_{\alpha^{\prime}
\rightarrow i}^t (c)\label{eq:msg:i_to_alpha}.
\end{align}
%
where each $z_i$ only sends messages to $F_i$; that is,
$z_i$ sends messages to $\alpha_j$ only if $\alpha_j$ selects/includes $i$.
%
%
%
The outer-term in the message computation \eqref{eq:msg:alpha_to_i} 
is maximized over all possible $z_\alpha\in\{0,1\}^{|\alpha|}$ with $z_i=c$.
%
%
The inner-term is a product that only depends on the variables $z_j$ (excluding $z_i$) that
are connected to $\alpha$. 
%
%
%
%
The message-update \eqref{eq:msg:i_to_alpha} from variable $z_i$ to factor $\psi_\alpha$ is a product containing
all messages received by $\psi_\alpha$ in the previous iteration,
except for the message sent by $z_i$ itself.
%
%

Given a set of messages $\{m_{i\to\alpha}(c)$, $m_{\alpha\to
i}(c):c\in\{0,1\}\}$, the so-called BP marginal beliefs are computed as follows:
\begin{eqnarray}\label{eq:bpdecision}
b_i[z_i]&=&
{\psi_i (z_i)}
\prod_{\alpha\in F_i} m_{\alpha\to i}(z_i).\label{eq:marginalbelief}
%
%
%
%
\end{eqnarray}
This BP algorithm outputs $z^{BP}=[z_i^{BP}]$ where
$$
z_i^{BP}=\begin{cases}
1&\mbox{if}~ b_i[1]>b_i[0]\\
?&\mbox{if}~b_i[1]=b_i[0]\\
0&\mbox{if}~ b_i[1]<b_i[0]
\end{cases}.
$$
It is known that $z^{BP}$ converges to a MAP assignment after a sufficient number of iterations,
if the factor graph 
is a tree and the MAP assignment is unique. However, if the graph contains loops, the BP
algorithm is not guaranteed to converge to a MAP assignment in general.

\vspace{-0.05in}
\subsection{Belief propagation for linear programming}

A joint distribution of $n$ (binary) random variables $Z=[Z_i]\in \{0,1\}^n$ is called a Graphical Model (GM) if it factorizes as follows: for $z=[z_i]\in \Omega^n$,
\begin{equation*}
	\Pr[Z=z]~\propto~\prod_{i\in\{1,\dots,n\}}\psi_i(z_i)\prod_{\alpha\in F} \psi_{\alpha} (z_\alpha),\label{eq:generic_gm}
\end{equation*}
where $\{\psi_i,\psi_{\alpha}\}$ are (given) non-negative functions, the so-called factors; 
$F$ is a collection of subsets 
$$F=\{\alpha_1,\alpha_2,...,\alpha_k\}\subset 2^{\{1,2,\dots, n\}}$$
(each $\alpha_j$ is a subset of $\{1,2,\dots, n\}$ with $|\alpha_j|\ge 2$); $z_\alpha$ is the projection of $z$ onto
dimensions included in $\alpha$.\footnote{For example, if $z=[0,1,0]$ and $\alpha=\{1,3\}$, then $z_\alpha=[0,0]$.} 
In particular, $\psi_i$ is called a variable factor.
%
%
%
Assignment ${z}^*$ is called a maximum-a-posteriori (MAP) solution if
${z}^*=\arg\max_{{z}\in\{0,1\}^n} \Pr[{z}].$
Computing a MAP solution is typically
computationally intractable (i.e., NP-hard) unless the induced bipartite graph 
of factors $F$ and variables $z$, so-called factor graph, has a bounded treewidth \citep{CSH08}.
%
%
%
%
The max-product Belief Propagation (BP) algorithm is a popular simple heuristic for approximating the MAP solution in a GM,
where it iterates messages over a factor graph. 
BP computes a MAP solution exactly after a sufficient number of iterations,
if the factor graph 
is a tree and the MAP solution is unique. However, if the graph contains loops, BP
is not guaranteed to converge to a MAP solution in general.
Due to the space limitation, we provide detailed backgrounds
on BP in the supplemental material.

Consider the following GM:  for $x=[x_i]\in \{0,1\}^n$ and $w=[w_i]\in \mathbb R^n$,
\begin{equation}
	\Pr[X=x]~\propto~\prod_{i} e^{-w_i x_i}\prod_{\alpha\in F} \psi_{\alpha} (x_\alpha),\label{eq:gm1}
\end{equation}
where $F$ is the set of non-variable factors and
the factor function $\psi_\alpha$ for $\alpha\in F$ is defined as
\begin{align*}
&\psi_{\alpha}(x_{\alpha}) = 
\begin{cases}
1&\mbox{if}~ A_{\alpha} x_{\alpha}\ge b_{\alpha},~ C_{\alpha}x_{\alpha}=d_{\alpha}\\
0&\mbox{otherwise}
\end{cases},
\end{align*}
for some
matrices $A_{\alpha}, C_{\alpha}$ and vectors $b_{\alpha}, d_{\alpha}$.
Now we consider the Linear Program (LP) corresponding to this GM:
\begin{equation}\label{eq:lp1}
\begin{split}
	&\mbox{minimize}\qquad~ w\cdot x\\
	&\mbox{subject to}\qquad 
\psi_\alpha(x_\alpha)=1,\quad \forall \alpha\in F,\qquad
x=[x_i]\in [0,1]^n.
\end{split}
\end{equation}
One observes that the MAP solution for GM \eqref{eq:gm1} corresponds to the (optimal) solution of LP \eqref{eq:lp1}
if the LP has an integral solution $x^*\in \{0,1\}^n$. 
Furthermore, the following sufficient conditions
relating max-product BP to LP are known \cite{PS145}:

\begin{theorem}\label{thm:bplp}
      The max-product BP applied to GM \eqref{eq:gm1} 
converges to the solution of LP \eqref{eq:lp1} if the following conditions hold:
\begin{itemize}
\item[C1.] LP \eqref{eq:lp1} has a unique integral solution $x^*\in\{0,1\}^n$, i.e., it is tight.
\item[C2.] For every $i\in \{1,2,\dots, n\}$, the number of factors associated with $x_i$ is at most two, i.e., $|F_i|\leq 2.$
\item[C3.] For every factor $\psi_\alpha$, every $x_\alpha\in\{0,1\}^{|\alpha|}$ with $\psi_\alpha(x_\alpha)=1$, and 
	every $i\in\alpha$ with $x_i\neq x^*_i$, 
	there exists 
$\gamma\subset \alpha$ 
such that
$$|\{j \in\{i\}\union \gamma:|F_j|=2\}|\le 2$$
$$
\psi_\alpha(x^\prime_\alpha)=1,\qquad
\mbox{
where 
$x^\prime_k = \begin{cases}
		x_k~&\mbox{if}~k\notin \{i\}\union \gamma\\
		x^*_k~&\mbox{otherwise}
\end{cases}$.}$$
$$
\psi_\alpha(x^{\prime\prime}_\alpha)=1,\qquad
\mbox{
where 
$x^{\prime\prime}_k = \begin{cases}
		x_k~&\mbox{if}~k\in\{i\}\union \gamma\\
		x^*_k~&\mbox{otherwise}
\end{cases}$.}$$
    
\end{itemize}       
\end{theorem}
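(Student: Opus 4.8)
The plan is to prove this through the standard \emph{computation tree} technique, which reduces the behaviour of loopy max-product BP to exact MAP inference on a finite tree-structured GM. Fixing a variable $x_i$, I would first construct the depth-$t$ computation tree $T_i^t$ obtained by unrolling the factor graph from the root $x_i$, where levels alternate between variable-copies and factor-copies, each factor-copy of $\alpha$ carries the same $\psi_\alpha$, and each variable-copy of $x_j$ carries the same weight $w_j$. The key structural fact, which I would invoke as a known lemma, is that the max-product messages after $t$ iterations compute exactly the MAP assignment of the root in the weighted tree GM on $T_i^t$; hence the finite-$t$ BP decision $z_i^{BP}$ equals the root value of some optimal tree configuration $\tau$ (the one minimizing $w\cdot\tau$, since $e^{-w_jx_j}$ turns maximum probability into minimum cost). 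It therefore suffices to show that for all sufficiently large $t$ every optimal $\tau$ satisfies $\tau_{\mathrm{root}}=x^*_i$, where $x^*$ is the unique integral optimizer of LP \eqref{eq:lp1} guaranteed by condition C1.

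Next I would argue by contradiction: suppose some optimal $\tau$ has $\tau_{\mathrm{root}}\neq x^*_i$. Let $\bar{x}^{*}$ denote the lift of $x^*$ to the tree, assigning each copy of $x_j$ the value $x^*_j$; this is a feasible tree configuration. Consider the disagreement set $D$ of variable-copies where $\tau\neq\bar{x}^{*}$. Condition C2 ($|F_j|\le 2$) is precisely what controls the geometry of $D$: since every variable touches at most two factors, the disagreeing degree-two variables chain up into alternating paths emanating from the root. Along such a path I would use condition C3 to perform a \emph{feasible flip}: starting at the root, flip $\tau$ toward $x^*$ on the set $\{i\}\cup\gamma$ supplied by C3. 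The two feasibility guarantees (the amended configurations $x'$ and $x''$) ensure the affected factor stays satisfied, and the counting bound $|\{j\in\{i\}\cup\gamma : |F_j|=2\}|\le 2$ guarantees the flip creates at most two new degree-two disagreements, so it propagates the path without branching. Iterating produces a feasible tree configuration $\tau'$ that agrees with $x^*$ at the root.

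Finally I would compare weights. The flip changes the cost only along the flipped path, and projecting that path back to $G$ exhibits it as a concatenation of alternating/augmenting displacements between $\tau$ and $x^*$ in the original graph. By the optimality and uniqueness of $x^*$ in LP \eqref{eq:lp1} (condition C1), any such alternating modification differing from $x^*$ has strictly larger cost; translating this back to the tree forces $w\cdot\tau'<w\cdot\tau$, contradicting the optimality of $\tau$ on $T_i^t$. Since the choice of root $i$ was arbitrary, BP returns $x^*$ at every coordinate, establishing the theorem.

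The step I expect to be the main obstacle is exactly this last weight-accounting across the lift. A path in the computation tree need not project to a single clean alternating structure in $G$, and $\tau'$ itself may fail to project to a feasible point of LP \eqref{eq:lp1}, so one cannot invoke C1 directly on $\tau'$. The resolution — and the reason C2 and C3 are phrased as they are — is to decompose the projected path into pieces each of which is an LP-certified feasible displacement, write the cost change as a nonnegative combination of these increments, and control the contribution at the \emph{leaves} of $T_i^t$, where a path may terminate incompletely. Showing that the interior gain is bounded below by a positive constant (coming from the strict optimality in C1) independent of $t$, while the boundary term stays bounded, so that the interior improvement dominates for large $t$, is where the genuine work lies.
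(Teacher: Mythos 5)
First, a point of reference: this paper never proves Theorem~\ref{thm:bplp} at all. It is imported verbatim as a known result from \cite{PS145} (``the following sufficient conditions \ldots are known''), and no argument for it appears in the body or the appendix, which proves only Lemma~\ref{lem:halfinv}, Corollary~\ref{cor:matching} and Claims~\ref{claim1}--\ref{claim2}. So your proposal can only be measured against the proof in the cited source, and at the level of strategy it matches it: that proof is likewise a computation-tree argument in which $x^*$ is lifted to the unrolled tree, C2 keeps the flip structure path-like (the cardinality bound in C3 prevents branching, exactly as you say), and C1 is invoked at the end through a projection back to the LP polytope.

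The genuine gap is the step you yourself defer, and it is not a technicality --- it is the entire content of the theorem, since everything before it (the computation-tree lemma, the lifting, the path construction) is routine. Moreover your sketch of it has the mechanics slightly misaligned. You use the two feasibility clauses of C3 jointly ``to keep the affected factor satisfied'' when flipping $\tau$ toward $x^*$; in fact only the $x'$ clause does that. The $x''$ clause has a different, indispensable role: it certifies that the lift of $x^*$, flipped toward $\tau$ on the \emph{same} copies, also remains feasible on the tree. With both configurations feasible one has the exchange identity $w\cdot\tau + w\cdot\bar{x}^{*} = w\cdot\tau' + w\cdot\bar{x}^{*\prime}$ (the two flips act on complementary sets of copies), so MAP optimality of $\tau$, i.e.\ $w\cdot\tau \le w\cdot\tau'$, forces $w\cdot\bar{x}^{*\prime} \le w\cdot\bar{x}^{*}$; projecting $\bar{x}^{*\prime}$ to $G$ and averaging along the path (the constraint sets $\{\psi_\alpha(x_\alpha)=1\}$ in LP~\eqref{eq:lp1} are polytopes, so averaging preserves feasibility) produces a feasible point of LP~\eqref{eq:lp1} distinct from $x^*$ with essentially no larger cost, and it is \emph{this} point that collides with C1 --- not a direct comparison of $w\cdot\tau'$ against $w\cdot\tau$, which on its own has no source of strictness. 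Finally, ``essentially'' hides the leaf terms: the contradiction materializes only when the interior gain, bounded below by some $\delta>0$ per unit length obtained from strict uniqueness in C1 together with the \emph{finiteness} of possible projected disagreement directions, outgrows the $O(1)$ boundary contribution, i.e.\ for $t$ large. You correctly name both difficulties, but naming them is where the proposal stops; as written it is an accurate roadmap of the proof in \cite{PS145}, not a proof.
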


\vspace{-0.1in}
\section{Main result: Blossom Belief Propagation}\label{sec:main}
In this section, we introduce our main result -- an iterative algorithm, coined Blossom-BP, for solving 
the minimum weight perfect matching problem over an arbitrary graph, where
the algorithm uses the max-product BP as a subroutine. We first describe the algorithm
using LP instead of BP in Section \ref{sec:blp}, where we call it Blossom-LP. 
Its BP implementation is explained in Section \ref{sec:bbp}.

\vspace{-0.05in}
\subsection{Blossom-LP algorithm}\label{sec:blp}
Let us modify the edge weights:
$w_{e} \leftarrow w_{e} + n_{e},$
where $n_{e}$ is an i.i.d.\ random number chosen in the interval $\left[0, \frac{1}{|V|}\right]$.
Note that 
the solution of the minimum weight perfect matching problem \eqref{ip:matching} 
remains the same after this modification {since sum of the overall noise is smaller than 1}.
The Blossom-LP algorithm updates the following parameters iteratively.
\begin{itemize}
\item[$\circ$] $\mathcal{L}\subset 2^V$: a {\em laminar} collection of odd cycles in $G$.
\item[$\circ$] $y_v,y_S$: $v\in V$ and $S\in \mathcal L$.
\end{itemize}
In the above, $\mathcal L$ is called laminar if 
{for every $S,T\in \mathcal L$, $S\cap T =\emptyset$, $S\subset T$ or $T\subset S$.}
We call $S\in \mathcal L$ an {\em outer} blossom if there exists no $T\in\mathcal L$
such that $S\subset T$.
Initially, $\mathcal L=\emptyset$ and $y_{v} = 0$ for all $v\in V$. 
The algorithm iterates between Step {\bf A} and Step {\bf B} and
terminates at Step {\bf C}.

\vspace{0.1in}
\noindent {\bf Blossom-LP algorithm}
\hrule
\paragraph {A. Solving LP on a contracted graph.}
First construct an auxiliary (contracted) graph $G^{\dagger} = (V^{\dagger}, E^{\dagger})$ 
by contracting every outer blossom in $\mathcal L$ to a single vertex,
where the weights $w^{\dagger}=[w^{\dagger}_e:e\in E^{\dagger}]$ are defined as
$$w_{e}^{\dagger} =  w_e-
\sum_{v\in V: v\not\in V^{\dagger},e\in \delta (v)}y_v
-\sum_{S\in \mathcal L: v(S)\not\in V^{\dagger},e\in \delta (S)}y_S, 
\qquad \forall~e\in E^{\dagger}.$$
We let $v(S)$ denote the blossom vertex in $G^\dagger$ coined as the contracted graph and
solve the following LP:
\begin{equation}\label{lp:modifiededmonds}
\begin{split}
\mbox{minimize}\qquad&w^{\dagger}\cdot x\\
\mbox{subject to}\qquad&
\sum_{e\in\delta(v)}x_e=1,\quad\forall~ v\in V^{\dagger},~\text{$v$ is a non-blossom vertex}\\
&\sum_{e\in\delta(v)}x_e\ge 1,\quad\forall~ v\in V^{\dagger},~\text{$v$ is a blossom vertex}\\
&x=[x_e]\in[0,1]^{|E^{\dagger}|}.
\end{split}
\end{equation}

\paragraph {B. Updating parameters.}
After we obtain a solution $x=[x_e:e\in E^{\dagger}]$ of LP \eqref{lp:modifiededmonds}, 
the parameters are updated as follows:
\begin{itemize}
\item[(a)] If $x$ is integral, i.e., $x\in \{0,1\}^{|E^{\dagger}|}$ and $\sum_{e\in\delta(v)}x_{e} = 1$ for all $v\in V^{\dagger}$, then proceed to the termination step {\bf C}. 
\item[(b)]
Else if there exists a blossom $S$ such that $\sum_{e\in\delta(v(S))}x_{e} > 1$, then we choose one of such 
blossoms and
update 
$$\mathcal L \leftarrow \mathcal L \backslash \{S\}\qquad\mbox{and}\qquad
y_v \leftarrow 0,\quad \forall ~v\in S.$$
Call this step `blossom $S$ expansion'.
\item[(c)] Else if there exists an odd cycle $C$ in $G^{\dagger}$ such that
$x_e=1/2$ for every edge $e$ in it, we choose one of them and update 
$$\mathcal L\leftarrow\mathcal{L}\cup \{V(C)\}
\qquad\mbox{and}\qquad
y_{v}\leftarrow {1\over 2}\sum_{e\in E(C)}(-1)^{d(e,v)}w^{\dagger}_{e}, \quad \forall v\in V(C),$$
where $V(C),E(C)$ are the set of vertices and edges of $C$, respectively,
and $d(v,e)$ is the graph distance from vertex $v$ to edge $e$ in the odd cycle $C$.
The algorithm also remembers the odd cycle $C=C(S)$ corresponding to every blossom $S\in \mathcal L$.
\end{itemize}
If (b) or (c) occur, go to Step {\bf A}. 

\paragraph {C. Termination.} 
The algorithm iteratively expands blossoms in $\mathcal L$ to obtain the minimum weighted perfect matching $M^{*}$ as follows:
\begin{itemize}
\item[(i)] Let $M^{*}$ be the set of edges in the original $G$ such that its corresponding edge $e$ in the contracted graph $G^\dagger$ has $x_{e} = 1$,
where $x=[x_e]$ is the (last) solution of LP \eqref{lp:modifiededmonds}.
\item[(ii)] If $\mathcal L = \emptyset$, output $M^{*}$. 
\item[(iii)] Otherwise, choose an outer blossom $S\in\mathcal L$, 
then update $G^{\dagger}$ by expanding $S$, i.e. $\mathcal L \leftarrow \mathcal L\backslash\{S\}$.
\item[(iv)] Let $v$ be the vertex in $S$ covered by $M^{*}$
and $M_{S}$ be a matching covering $S\backslash\{v\}$ using the edges of odd cycle $C(S)$. 
\item[(v)] Update $M^{*} \leftarrow M^{*}\cup M_{S}$ and go to Step (ii).
\end{itemize}
\vspace{0.1in}
\hrule
\vspace{0.1in}

\begin{figure}[h]
\centering
\subfigure[Initial graph]{
	\includegraphics[scale=0.2]{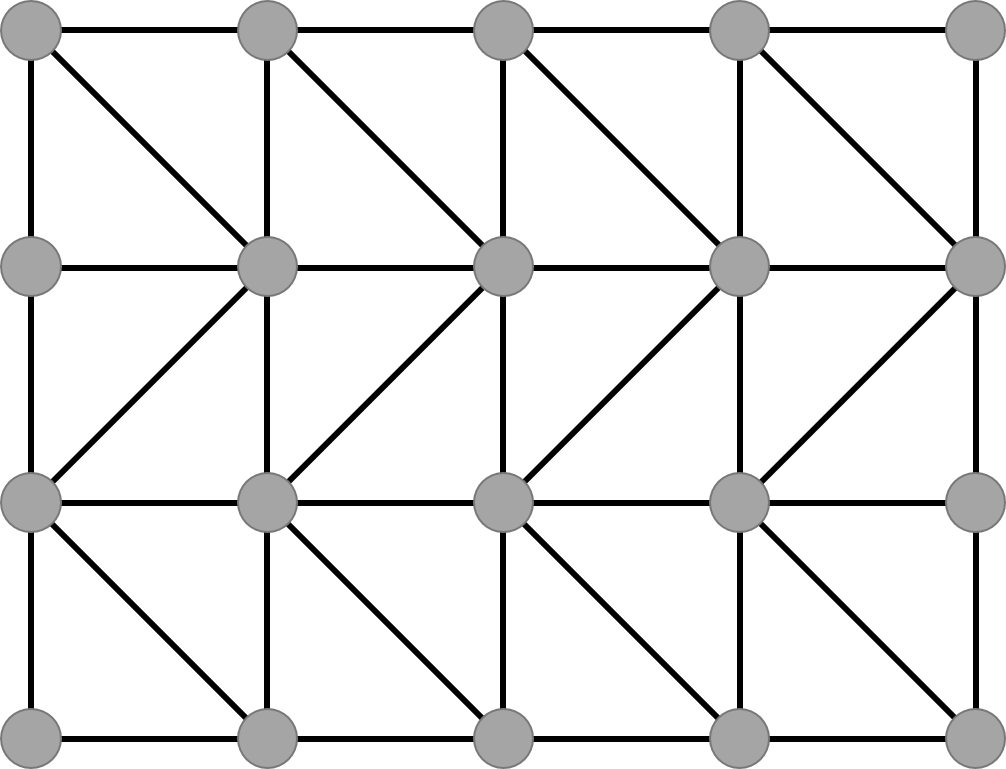}
	\label{fig:alg_step1}
}
\hfill
\subfigure[Solution of LP \eqref{lp:modifiededmonds} in the 1st iteration]{
	\includegraphics[scale=0.2]{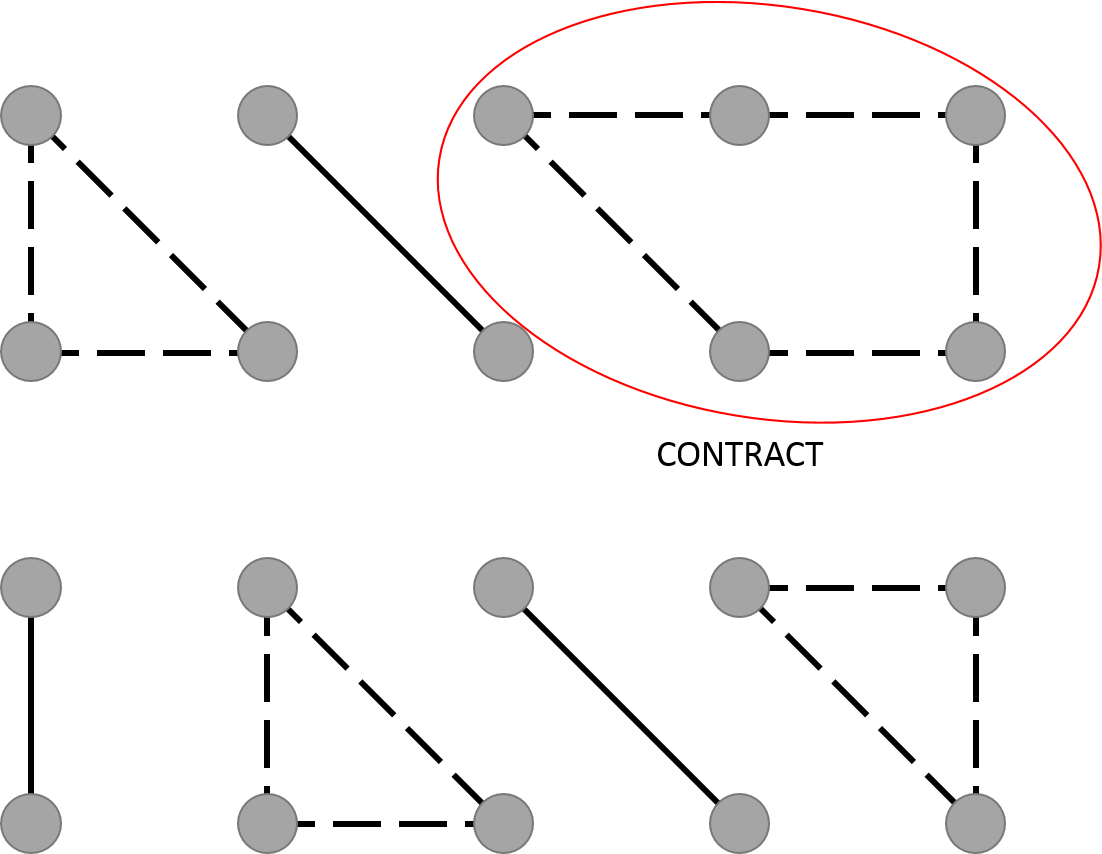}
	\label{fig:alg_step2}
}
\hfill
\subfigure[Solution of LP \eqref{lp:modifiededmonds} in the 2nd iteration]{
	\includegraphics[scale=0.2]{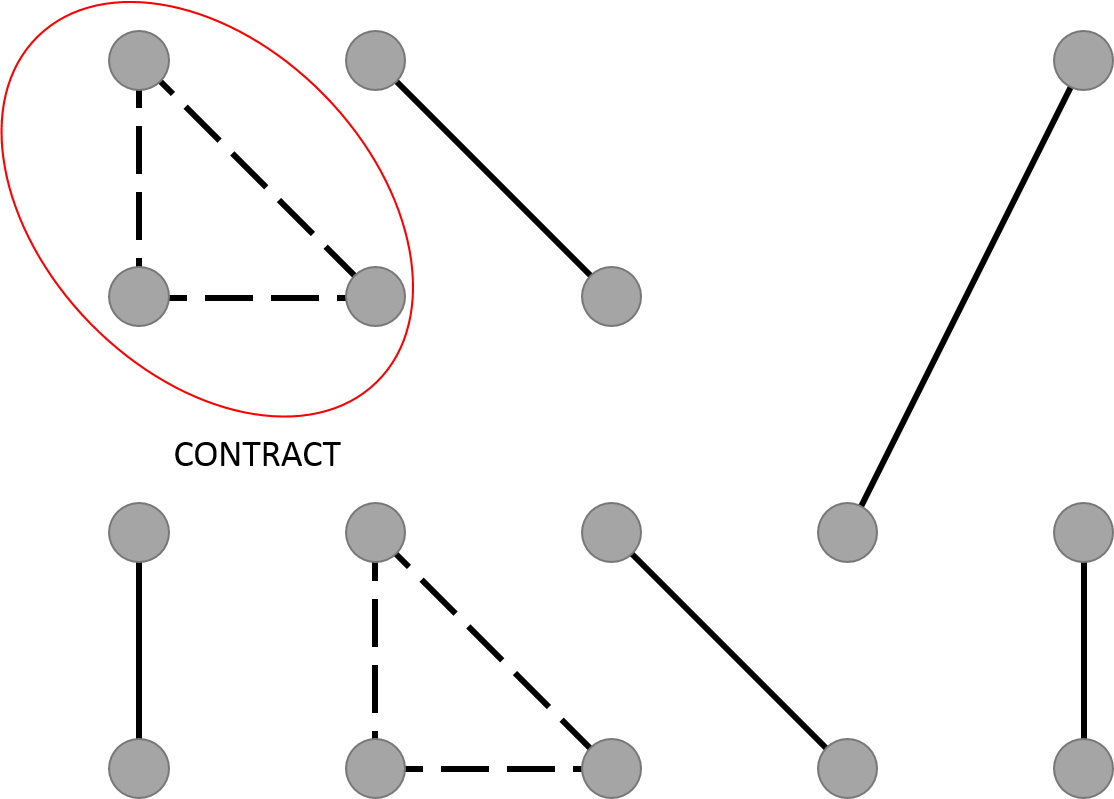}
	\label{fig:alg_step3}
}
\vspace*{\fill}
\subfigure[Solution of LP \eqref{lp:modifiededmonds} in the 3rd iteration]{
	\includegraphics[scale=0.2]{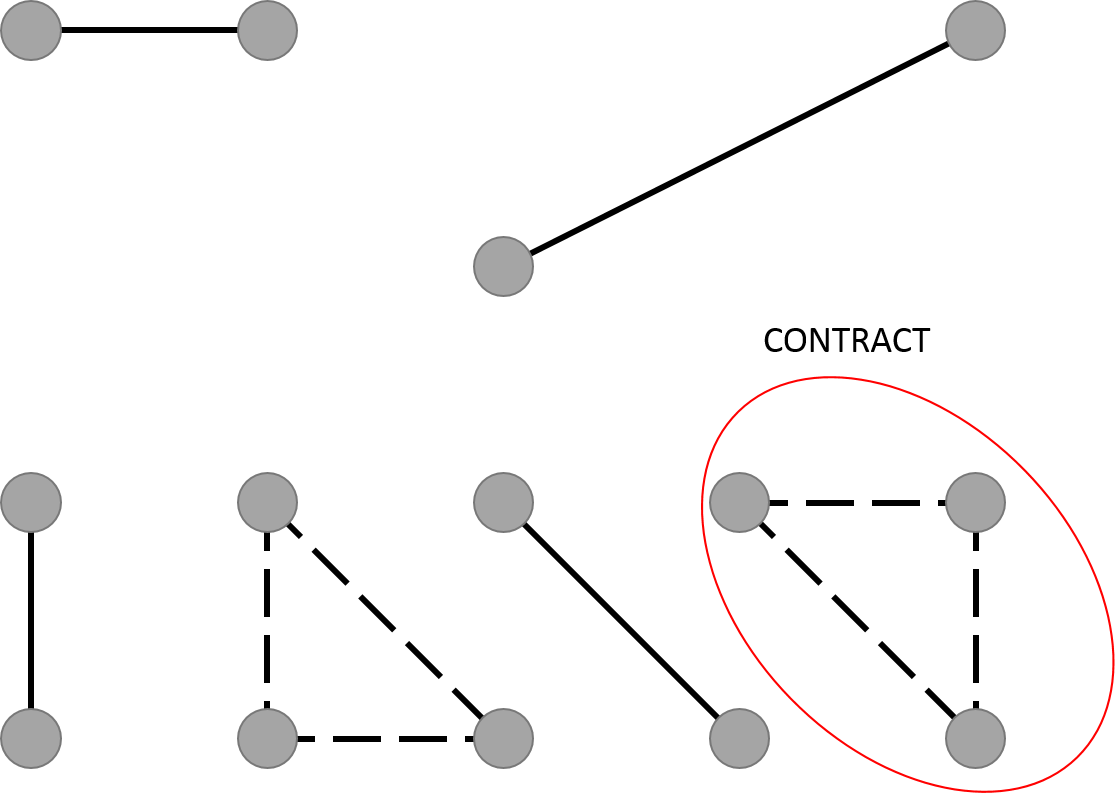}
	\label{fig:alg_step4}
}
\hfill
\subfigure[Solution of LP \eqref{lp:modifiededmonds} in the 4th iteration]{
	\includegraphics[scale=0.2]{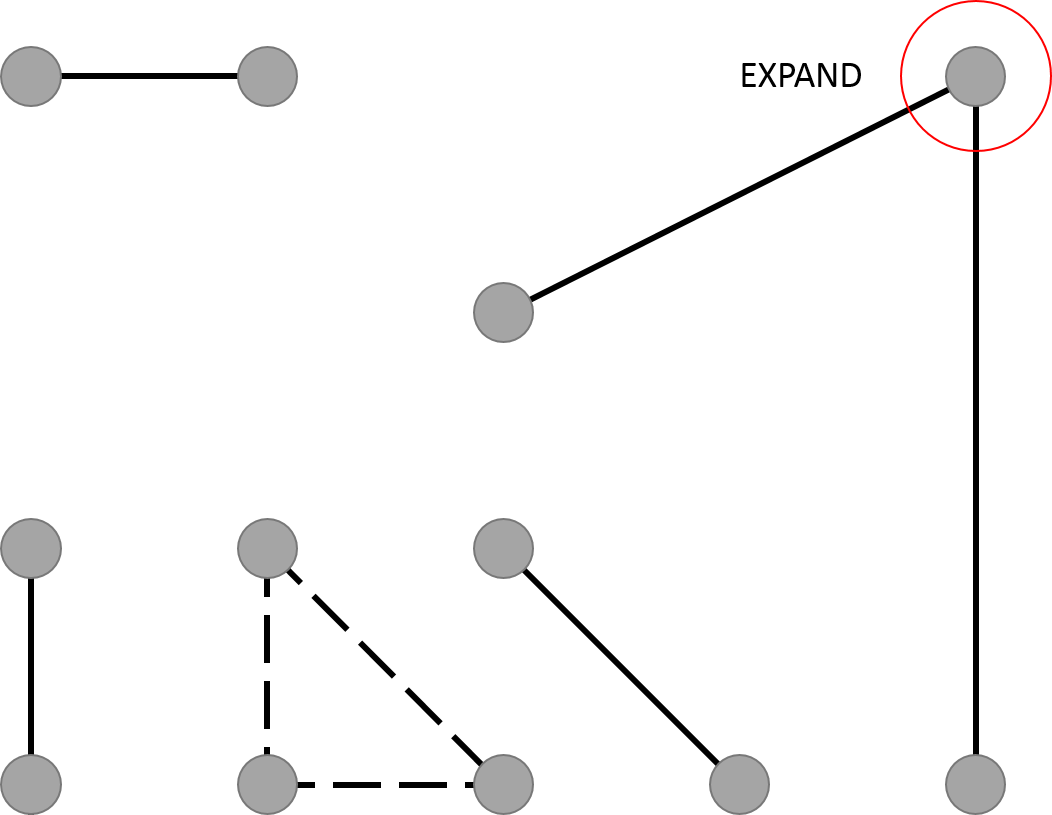}
	\label{fig:alg_step5}
}
\hfill
\subfigure[Solution of LP \eqref{lp:modifiededmonds} in the 5th iteration]{
	\includegraphics[scale=0.2]{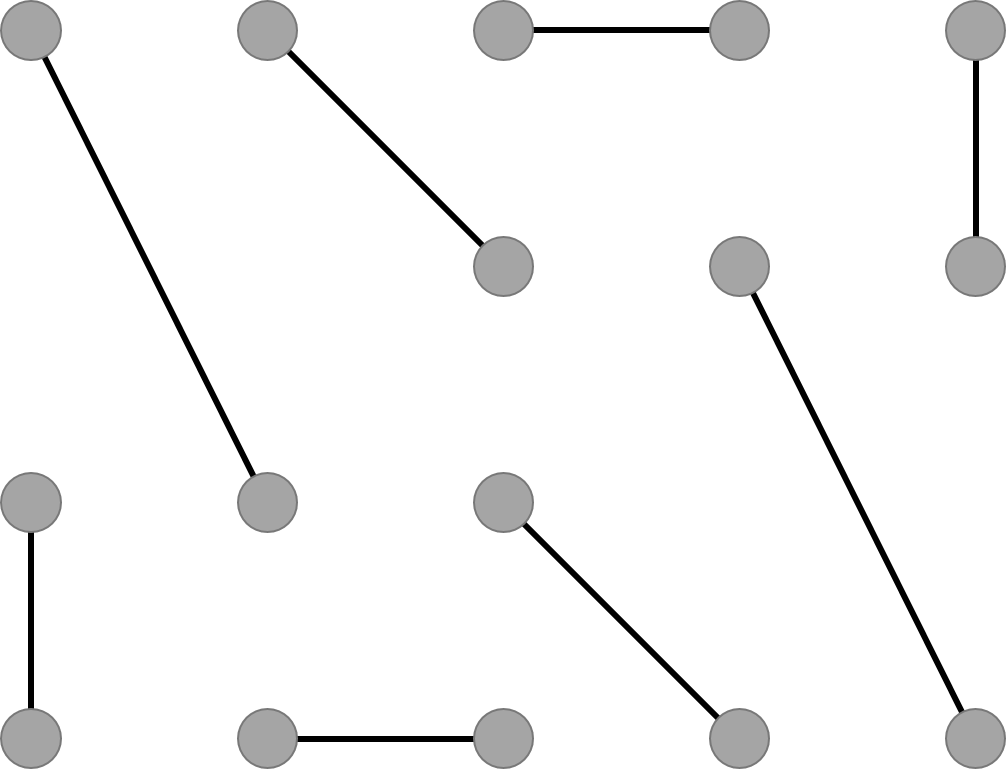}
	\label{fig:alg_step6}
}
\hfill
\subfigure[Output matching]{
	\includegraphics[scale=0.2]{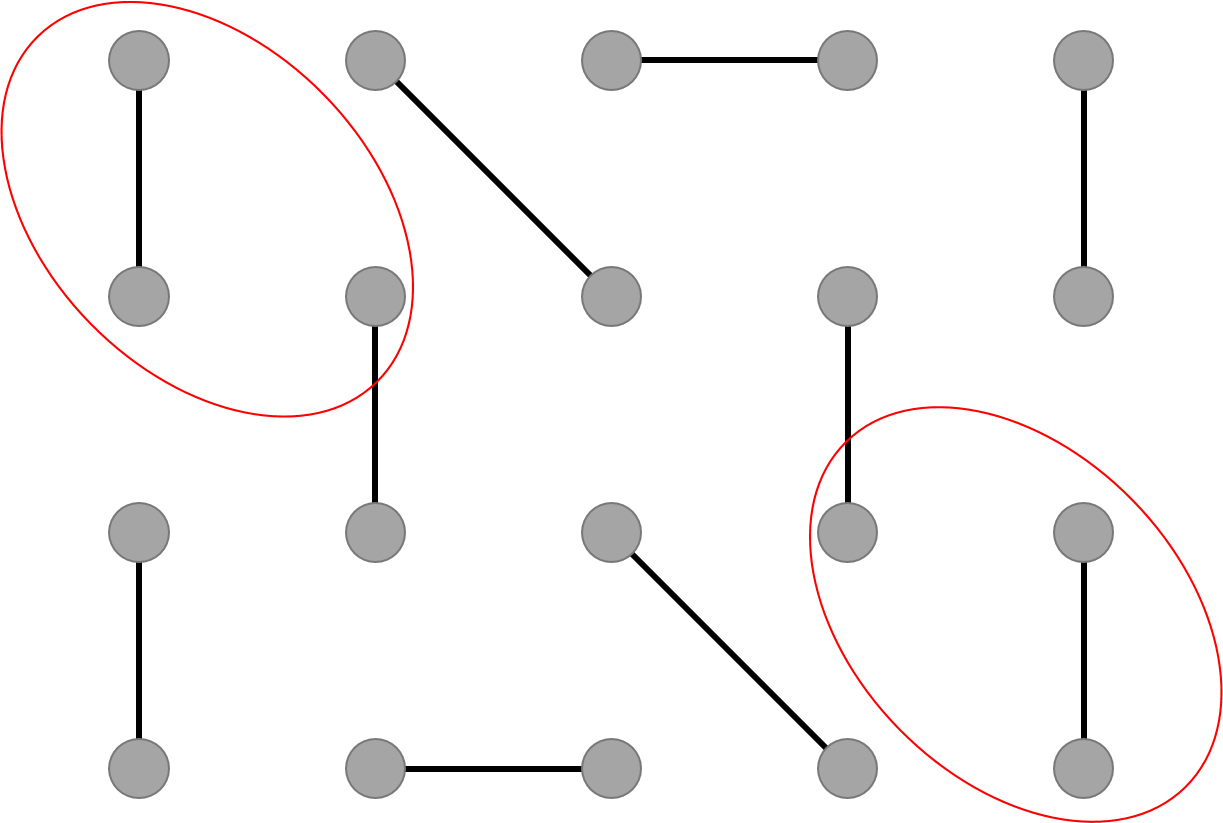}
	\label{fig:alg_step7}
}
\caption{Example of {evolution of Blossoms under Blossom-LP}, where
solid and dashed lines correspond to $1$ and $\frac{1}{2}$ solutions of LP \eqref{lp:modifiededmonds}, respectively.}
\end{figure}
We provide the 
following running time guarantee for this algorithm, which is proven in Section \ref{sec:proof}.
\begin{theorem}\label{thm:main}
Blossom-LP outputs the minimum weight perfect matching in 
$O(|V|^2)$ iterations.
\end{theorem}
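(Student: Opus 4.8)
The plan is to separate the statement into two claims bundled together: correctness (the output $M^*$ is a minimum weight perfect matching of $G$) and the $O(|V|^2)$ iteration bound. I would prove the former by LP duality against the full Edmonds' LP \eqref{lp:edmonds}, and the latter by simulating Blossom-LP with a tie-broken variant of the Blossom-V algorithm \cite{K09}.

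For correctness, I would first pin down the structure of the optimal solutions of the contracted LP \eqref{lp:modifiededmonds}. Since its feasible region is a mild variant of the fractional matching polytope, every basic optimal solution is half-integral, and the edges carrying value exactly $\frac{1}{2}$ decompose into vertex-disjoint odd cycles. This is precisely the trichotomy that Step B exploits: the solution is either integral (so we terminate at C), or it over-covers some blossom vertex (case b), or it contains a half-integral odd cycle (case c); hence the algorithm can always proceed and never stalls. Next I would show that the dual variables $\{y_v,y_S\}$ together with the reduced weights $w^\dagger$ maintained across iterations form a feasible dual of \eqref{lp:edmonds}, and that the update $y_v=\frac{1}{2}\sum_{e\in E(C)}(-1)^{d(e,v)}w^\dagger_e$ in case (c) is exactly the choice that drives the reduced cost of every cycle edge to zero, so complementary slackness is preserved. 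Finally, at the termination step, the integral primal solution and the accumulated dual satisfy complementary slackness for \eqref{lp:edmonds}; because that LP with the full blossom family is integral, this certifies optimality of $M^*$. It remains to check that the expansion routine (i)--(v) returns a genuine perfect matching of the original $G$: each outer blossom is an odd cycle with exactly one vertex matched outward, so the remaining even number of vertices is matched perfectly along $C(S)$.

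For the iteration bound I would construct an explicit correspondence between the trajectory of Blossom-LP and that of a variant of Blossom-V. The guiding idea is that a single solve of \eqref{lp:modifiededmonds} performs in one shot what Blossom-V accomplishes through many local primal-dual adjustments: a half-integral cycle detected in case (c) is the blossom Blossom-V would shrink, and an over-covered blossom in case (b) is the one it would expand. Each outer-loop iteration thus maps to one shrink or expand of the simulated algorithm, and since the Blossom algorithm performs $O(|V|^2)$ such operations in total, the same bound transfers. The random perturbation $w_e\leftarrow w_e+n_e$ is essential here: it guarantees a unique optimum at every stage, so the LP solution, and hence the entire trajectory, is well defined and matches the tie-broken Blossom-V run; without it the correspondence could branch ambiguously.

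The hard part will be the simulation argument underlying the complexity bound. Establishing half-integrality and the duality certificate for correctness is comparatively standard, but proving that the macroscopic step of solving the contracted LP decomposes cleanly into a bounded batch of Blossom-V microsteps requires reconstructing the intermediate tree, cycle and claw structures that Blossom-V maintains explicitly but Blossom-LP keeps only implicitly through the laminar family $\mathcal L$ and the duals. In particular one must preclude a blossom from being contracted and expanded indefinitely; this is where laminarity of $\mathcal L$ and monotonicity of the dual objective under the perturbed weights do the essential work, and making this charging argument tight enough to yield $O(|V|^2)$ rather than a weaker polynomial is the real crux.
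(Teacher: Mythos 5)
Your correctness half is sound and is essentially the paper's own argument: half-integrality of LP \eqref{lp:modifiededmonds} yields the trichotomy behind Step {\bf B}, and the output is certified optimal by pairing the final integral primal with the accumulated duals $y_v,y_S$ and checking complementary slackness for Edmonds' LP \eqref{lp:edmonds} (this is exactly Lemma \ref{lem:optimality} in the paper).

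The gap is in the complexity half. You propose to map each Blossom-LP iteration to one shrink or expand of a tie-broken Blossom-V run and then invoke the known $O(|V|^2)$ bound on Blossom-V's operations. That correspondence cannot be set up with Blossom-V itself: Blossom-V maintains alternating forests rooted at unmatched vertices and makes progress by augmentations, whereas every intermediate LP solution of Blossom-LP covers all vertices --- it decomposes into odd cycles, claws and a matching, with no exposed vertices and no forest. (This mismatch is precisely why, as the paper observes, standard Edmonds/Blossom-V may contract and expand a long sequence of blossoms even when the bipartite relaxation is already integral, while Blossom-LP would terminate immediately.) So the simulation target must be a genuinely different algorithm --- the paper's ``auxiliary algorithm,'' which maintains a cycle-claw-matching decomposition, is initialized from the dual of the bipartite relaxation, and expands a blossom when it becomes the center of a claw (equivalently, when the LP over-covers it) rather than when a $-$ vertex of a tree reaches zero dual. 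Since this variant is not Blossom-V, its $O(|V|^2)$ bound does not ``transfer''; it must be proved from scratch, and that proof is the bulk of the work. The paper does it with a stage argument you would still need to supply: $|\mathcal O|+|\mathcal W|$ never increases and drops by two at each {\bf Matching} event (so there are $O(|V|)$ stages), and within a single stage a $+$ vertex can never later become a $-$ vertex, which caps the number of contractions and expansions per stage at $O(|V|)$. That last claim is where the random perturbation does its real work: it rules out accidental linear relations among the perturbed weights, and hence tight-edge paths connecting two half-integral cycles or zero-dual blossoms (Claims \ref{claim1} and \ref{claim2}); it is not merely a tie-breaking device ensuring a well-defined trajectory, and your proposed substitute (``monotonicity of the dual objective'') is undeveloped and does not obviously yield the per-stage bound.
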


\vspace{-0.05in}
\subsection{Blossom-BP algorithm}\label{sec:bbp}
In this section, we show that the algorithm can be implemented using BP. 
The result is derived in two steps,  where the first one consists in the following theorem.
\begin{theorem}\label{thm:halfintegral}
LP \eqref{lp:modifiededmonds}
always has a half-integral solution $x^*\in \left\{0,\frac12,1\right\}^{|E^\dagger|}$ such that
the collection of its half-integral edges forms disjoint odd cycles.
\end{theorem}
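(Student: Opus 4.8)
The plan is to reduce the statement to a characterization of the extreme points of the feasible polytope of LP \eqref{lp:modifiededmonds}. Since $G$ (hence the contracted graph $G^\dagger$) admits a perfect matching, the LP is feasible, and because $w^\dagger$ is finite and $x\in[0,1]^{|E^\dagger|}$ it is bounded; thus an optimal solution exists and may be taken to be a vertex of the polytope. It therefore suffices to prove that \emph{every} vertex $x$ satisfies $x\in\{0,\frac12,1\}^{|E^\dagger|}$ with its $\frac12$-valued edges forming vertex-disjoint odd cycles. This is a mixed-constraint analogue of the classical half-integrality of the fractional matching polytope, and the only real novelty is to accommodate the two constraint types appearing in \eqref{lp:modifiededmonds}: equalities at non-blossom vertices and inequalities $\ge 1$ at blossom vertices.

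First I would isolate the fractional part. Let $H$ be the spanning subgraph of $G^\dagger$ consisting of the edges $e$ with $0<x_e<1$; the integral edges ($x_e\in\{0,1\}$) correspond to tight bound constraints and play no role in determining the fractional coordinates. Because $x$ is a vertex, the degree constraints that hold with equality, restricted to the columns indexed by the edges of $H$, must have column rank equal to the number of those edges. For a connected component $K$ of $H$ with $n_K$ vertices and $m_K$ edges, these rows form a submatrix of the vertex--edge incidence matrix of $K$, whose rank is $n_K-1$ if $K$ is bipartite and $n_K$ if $K$ is non-bipartite. Since distinct components share no edges, the rank condition forces $m_K=\mathrm{rank}\le n_K$ for every $K$, so each component is either a tree or unicyclic (at most one cycle).

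Next I would rule out every structure except a single odd cycle using a leaf/tightness argument. The key local fact is that a non-blossom vertex cannot have degree $1$ in $H$: its equality constraint together with the integrality of the remaining incident edges would force the unique fractional edge to be integral, a contradiction; likewise a blossom vertex of degree $1$ in $H$ can occur only if its constraint is \emph{slack}, in which case it contributes no tight row. Counting tight rows against the $n_K-1$ (tree) or $n_K$ (unicyclic) needed for full rank then eliminates trees outright and forces, in the unicyclic case, that every vertex of $K$ is tight and has degree $2$ in $H$, i.e. $K$ is a single cycle; non-bipartiteness makes it odd. Finally, tightness at each cycle vertex gives $x_{e}+x_{e'}=1$ for its two incident cycle edges, and solving these relations around an odd cycle forces all of its edges to equal $\frac12$. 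Together these establish half-integrality and that the $\frac12$-edges are disjoint odd cycles.

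The step I expect to be the main obstacle is the careful handling of the inequality (blossom) constraints: unlike the pure perfect-matching polytope, a blossom vertex may be slack, so I cannot simply assert that every component vertex contributes a tight row. The delicate bookkeeping is to show that slack blossom vertices can only appear as would-be leaves, and that each such leaf simultaneously lowers the number of available tight rows and violates local integrality, so that the rank count $m_K\le(\text{number of tight vertices of }K)$ closes off trees and non-cycle unicyclic components. Once this accounting is in place, the reduction to all-$\frac12$ odd cycles is routine.
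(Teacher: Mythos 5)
Your proof is correct, and it follows a genuinely different route from the paper's. Both arguments make the same first reduction --- it suffices to show that every vertex of the polytope of LP \eqref{lp:modifiededmonds} is half-integral with its fractional edges forming disjoint odd cycles, since the optimum of a feasible, bounded LP is attained at a vertex --- but from there the paper goes algebraic while you stay combinatorial. The paper writes the vertex as the unique solution of the system of tight constraints, substitutes the integral coordinates to get $Ax=b$ with $A$ an invertible $0$-$1$ matrix having at most two nonzero entries per column, and invokes Lemma \ref{lem:halfinv} (entries of $A^{-1}$ lie in $\left\{0,\pm 1,\pm\frac12\right\}$) to conclude half-integrality, leaving the odd-cycle structure of the half-valued edges as an ``easy to check'' remark. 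You instead analyze the fractional support $H$ directly: the rank characterization of extreme points, the fact that the incidence matrix of a connected graph has rank $n_K-1$ or $n_K$ according to bipartiteness, and the leaf/slackness dichotomy jointly force every component of $H$ to be an odd cycle all of whose vertices are tight, after which the value $\frac12$ on every cycle edge follows by solving the alternating relations around an odd cycle. The two proofs are close cousins --- the proof of Lemma \ref{lem:halfinv} itself builds a graph whose components contain at most one odd cycle --- but yours buys two things: the disjoint-odd-cycle conclusion is produced inside the argument rather than asserted, and the one genuinely new feature of \eqref{lp:modifiededmonds} relative to the classical fractional matching polytope, namely that blossom constraints may be slack, is handled explicitly by your observation that any degree-one vertex of $H$ must be a slack blossom vertex (this is exactly what eliminates trees and non-cycle unicyclic components, which the rank count alone cannot do). What the paper's packaging buys is modularity and brevity: Lemma \ref{lem:halfinv} is a reusable, self-contained matrix fact, and the reduction to it is shorter than your rank-and-leaf bookkeeping. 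One small correction to your feasibility remark: $G^\dagger$ need not admit a perfect matching; feasibility of \eqref{lp:modifiededmonds} holds because the contraction image of a perfect matching of $G$ is feasible --- each blossom has odd size, so it emits an odd, hence positive, number of matching edges.
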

\begin{proof}
For the proof of Theorem \ref{thm:halfintegral},
once we show the half-integrality of LP \eqref{lp:modifiededmonds}, it is easy to check that the half-integral edges forms disjoint odd cycles.
Hence, it suffices to show that every vertex of the polytope consisting of constraints of LP \eqref{lp:modifiededmonds} is always half-integral.
To this end, we use the following lemma {which is proven in the appendix}.
\begin{lemma}\label{lem:halfinv}
Let $A=[A_{ij}]\in\{0,1\}^{m\times m}$ be an invertible $0$-$1$ matrix whose row has at most two non-zero entires.
Then, each entry $A^{-1}_{ij}$ of $A^{-1}$ is in $\left\{0,\pm 1,\pm\frac{1}{2}\right\}$.
\end{lemma}
Consider a vertex $x\in [0,1]^{|E^\dagger|}$ of the polytope consisting of constraints of LP \eqref{lp:modifiededmonds}.
Then, there exists a linear system of equalities such that
$x$ is its unique solution where each equality is either $x_e=0$,
$x_e=1$ or $\sum_{e\in \delta(v)} x_e=1$.
One can plug $x_e=0$ and $x_e=1$ into the linear system, reducing it 
to $Ax=b$ where $A$ is an invertible $0$-$1$ matrix whose column contains at most two non-zero entries.
Hence, from Lemma \ref{lem:halfinv}, $x$ is half-integral. This completes the proof of
Theorem \ref{thm:halfintegral}.
\end{proof}
Next let us design BP for obtaining the half-integral solution of LP \eqref{lp:modifiededmonds}.
First, we duplicate each edge $e\in E^\dagger$ into $e_1,e_2$ and define a new graph $G^\ddagger=(V^\dagger,E^\ddagger)$ where $E^\ddagger=\{e_1,e_2:e\in E^\ddagger\}$.
Then, we build the following equivalent LP:
\begin{equation}\label{lp:modifiededmonds-1}
\begin{split}
\mbox{minimize}\qquad&w^{\ddagger}\cdot x\\
\mbox{subject to}\qquad&\sum_{e\in\delta(v)}x_e=2,\quad\forall~ v\in V^{\dagger},~\text{$v$ is a non-blossom vertex}\\
&\sum_{e\in\delta(v)}x_e\ge 2,\quad\forall~ v\in V^{\dagger},~\text{$v$ is a blossom vertex}\\
&x=[x_e]\in[0,1]^{|E^{\dagger}|},
\end{split}
\end{equation}
where $w^\ddagger_{e_1}=w^\ddagger_{e_2}=w_e^\dagger$. 
One can easily observe that solving LP \eqref{lp:modifiededmonds-1} is equivalent to solving LP \eqref{lp:modifiededmonds}
due to our construction of $G^\ddagger,w^\ddagger$, and 
LP \eqref{lp:modifiededmonds-1} always have an integral solution due to Theorem \ref{thm:halfintegral}.
Now, construct the following GM for LP \eqref{lp:modifiededmonds-1}: 
\begin{equation}\label{gm:modmatching}
	\Pr[X=x]~\propto~\prod_{e\in E^\ddagger} e^{w^\ddagger_{e} x_{e}}\prod_{v\in V^\dagger} \psi_{v} (x_{\delta(v)}),
\end{equation}
where the factor function $\psi_v$ is defined as
\begin{align*}
&\psi_{v}(x_{\delta(v)}) = 
\begin{cases}
1&\mbox{if $v$ is a non-blossom vertex and}~ \sum_{e \in\delta(v)} x_{e}= 2\\
1&\mbox{else if $v$ is a blossom vertex and}~ \sum_{e \in\delta(v)} x_{e}\geq 2\\
0&\mbox{otherwise}
\end{cases}.
\end{align*}
For this GM,
we derive the following corollary of Theorem \ref{thm:bplp} {proven in the
appendix.}
\begin{corollary}\label{cor:matching}
If LP \eqref{lp:modifiededmonds-1} has a unique solution, then 
the max-product BP applied to GM \eqref{gm:modmatching} converges to it. 
\end{corollary}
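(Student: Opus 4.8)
The plan is to verify the three conditions C1, C2, C3 of Theorem~\ref{thm:bplp} for the GM~\eqref{gm:modmatching}, so that the corollary follows immediately. Condition C1 is essentially given: the hypothesis of Corollary~\ref{cor:matching} assumes LP~\eqref{lp:modifiededmonds-1} has a unique solution, and by Theorem~\ref{thm:halfintegral} together with the equivalence of LP~\eqref{lp:modifiededmonds-1} and LP~\eqref{lp:modifiededmonds}, this solution is integral (the half-integral solution on $G^\dagger$ becomes a $0$-$1$ solution after the edge duplication). Thus the LP is tight with a unique integral optimum $x^*$, matching C1 exactly; I would note that the sign convention $e^{w^\ddagger_e x_e}$ in~\eqref{gm:modmatching} versus $e^{-w_i x_i}$ in~\eqref{eq:gm1} corresponds to minimization of $w^\ddagger\cdot x$ as in~\eqref{eq:lp1}, so the correspondence is consistent.

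Condition C2 requires that each variable $x_e$ (for $e\in E^\ddagger$) appears in at most two factors. This is structurally clear: each factor $\psi_v$ in~\eqref{gm:modmatching} depends on $x_{\delta(v)}$, so an edge variable $x_e$ with endpoints $u,v$ appears in exactly the two factors $\psi_u,\psi_v$; hence $|F_e|\le 2$. I would state this as an immediate consequence of the construction, noting that duplicated edges $e_1,e_2$ each inherit the same two endpoints and so each has $|F_{e_1}|=|F_{e_2}|\le 2$.

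The substantive step is Condition C3, which demands a local exchange property of each factor $\psi_v$. Here I would take an arbitrary feasible configuration $x_{\delta(v)}$ with $\psi_v(x_{\delta(v)})=1$ and an edge $i\in\delta(v)$ with $x_i\neq x_i^*$, and exhibit a subset $\gamma\subset\delta(v)$ such that flipping the coordinates in $\{i\}\cup\gamma$ toward $x^*$ (for $x'$) or away from it (for $x''$) keeps $\psi_v=1$, while $|\{j\in\{i\}\cup\gamma:|F_j|=2\}|\le 2$. The natural choice of $\gamma$ is a single compensating edge at $v$: since $\psi_v$ only constrains the degree-sum $\sum_{e\in\delta(v)}x_e$ (to equal $2$ for non-blossom vertices, or to be $\ge 2$ for blossom vertices), changing $x_i$ by $\pm 1$ must be offset by an opposite change on one partner edge $i'\in\delta(v)$ to preserve the constraint; I would argue such a partner always exists because both $x_{\delta(v)}$ and $x^*_{\delta(v)}$ are feasible, so their degree-sums agree (equality case) or the slack can absorb the move (inequality case), forcing the discrepancy set to contain a matching partner. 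Taking $\gamma=\{i'\}$ gives $|\{i,i'\}|=2$, satisfying the cardinality bound.

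The main obstacle will be the blossom-vertex case, where $\psi_v$ encodes an \emph{inequality} $\sum_{e\in\delta(v)}x_e\ge 2$ rather than an equality, so the simple ``one-in, one-out'' pairing used for ordinary vertices need not preserve feasibility in both the $x'$ and $x''$ directions simultaneously. I would handle this by carefully exploiting the half-integral structure from Theorem~\ref{thm:halfintegral}: at the optimum the blossom vertex is incident to exactly the right configuration of $1$'s and $\tfrac12$'s (here doubled to $1$'s), and the uniqueness of the LP solution rules out degenerate slack that would break the exchange. The delicate point is choosing $\gamma$ so that \emph{both} $x'$ (flipping toward $x^*$) and $x''$ (flipping away) remain feasible for the inequality factor while respecting the degree-two count; I expect to need a short case analysis separating whether the move increases or decreases the degree-sum, and to invoke that a blossom vertex in $G^\dagger$ has an odd-cycle origin so its incident edges admit the requisite compensating partner. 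Once C3 is established in both cases, Theorem~\ref{thm:bplp} applies verbatim and the corollary is proved.
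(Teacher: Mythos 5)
Your reduction to Theorem \ref{thm:bplp}, and your verification of C1, C2, and of C3 at non-blossom (equality) vertices, coincide with the paper's proof: since $\sum_{e\in\delta(v)}x_e=\sum_{e\in\delta(v)}x^*_e=2$, a discrepancy at one edge forces an opposite discrepancy at some other edge $f$, and $\gamma=\{f\}$ works. The genuine gap is precisely the blossom case that you defer. Condition C3 quantifies over \emph{every} local configuration $x_{\delta(v)}$ with $\psi_v(x_{\delta(v)})=1$, i.e.\ every $0$--$1$ vector with $\sum_{e\in\delta(v)}x_e\ge 2$, not merely configurations arising from the LP geometry or from the odd cycle that was contracted to form $v$. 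Consequently the ``requisite compensating partner'' you hope to extract from the blossom's odd-cycle origin need not exist at all: take $x^*_{\delta(v)}=(1,1,0)$ and $x_{\delta(v)}=(1,1,1)$; these differ in only one coordinate, so no edge with an opposite discrepancy is available. Uniqueness of the LP optimum cannot help either, since it constrains $x^*$ but says nothing about the arbitrary feasible configuration $x$. So the argument you sketch for the case you yourself call delicate would fail as stated.

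The paper's fix is elementary and purely local to the factor: if a compensating partner $f$ with opposite discrepancy exists, take $\gamma=\{f\}$ exactly as in the equality case (the inequality $\ge 2$ survives both flips); otherwise set $f=e$, i.e.\ $\{i\}\cup\gamma=\{e\}$, so that only the single coordinate $e$ is flipped. In this fallback case all discrepancies on $\delta(v)$ have the same sign, hence $x$ and $x^*$ are pointwise comparable on $\delta(v)$, and feasibility of both $x'$ and $x''$ follows by comparing each sum against whichever of $x$ or $x^*$ is smaller. For instance, if $x_e=1\ne x^*_e=0$ and no partner exists, then $x\ge x^*$ pointwise, so $\sum_k x'_k=\sum_k x_k-1\ge\sum_k x^*_k\ge 2$ and $\sum_k x''_k=1+\sum_{k\ne e}x^*_k\ge 1+2\ge 2$; the case $x_e=0\ne x^*_e=1$ is symmetric. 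This monotonicity fallback is the missing ingredient in your proposal; no appeal to half-integrality, uniqueness, or the cycle structure of $G^\dagger$ is needed anywhere in C3.
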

The uniqueness condition stated in the corollary above is easy to guarantee by adding small random noise
corrections to edge weights.
Corollary \ref{cor:matching} shows that BP can compute the half-integral solution of LP \eqref{lp:modifiededmonds}.

\vspace{-0.1in}
\section{Proof of Theorem \ref{thm:main}}\label{sec:proof}

First, it is relatively easy to prove the correctness of Blossom-BP, as stated in the following lemma.
\begin{lemma}\label{lem:optimality}
If Blossom-LP terminates, 
it outputs the minimum weight perfect matching. 
\end{lemma}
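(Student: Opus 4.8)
The plan is to prove optimality of $M^*$ via LP duality applied to Edmonds' LP \eqref{lp:edmonds}, using the maintained parameters $(y_v,y_S)$ as a dual-feasible certificate. Recall that the dual of \eqref{lp:edmonds} (the redundant upper bounds $x_e\le 1$ are forced by the degree constraints) reads: maximize $\sum_{v\in V}y_v+\sum_{S\in\mathcal{L}}y_S$ subject to $y_S\ge 0$ and, for every edge $e=(u,v)$,
\[
y_u+y_v+\sum_{S\in\mathcal{L}:\,e\in\delta(S)}y_S\le w_e .
\]
I would extend the parameters by setting $y_S=0$ for every odd set $S\notin\mathcal{L}$, so that $(y_v,y_S)$ becomes a candidate dual for the full Edmonds LP, i.e. the one whose blossom family is the collection of \emph{all} odd cycles, which is known to be integral. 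The proof then reduces to verifying three facts at termination: (P) $M^*$ is a perfect matching of $G$; (D) the duals are feasible; and (CS) complementary slackness holds between $M^*$ and $(y_v,y_S)$.

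First I would check (P). The termination loop (i)--(v) starts from the integral solution $x$ of \eqref{lp:modifiededmonds} on the fully contracted graph $G^\dagger$ and repeatedly expands outer blossoms. Since each blossom $S$ is an odd cycle entered by exactly one $M^*$-edge at some vertex $v$, the remaining even number of vertices $S\setminus\{v\}$ is matched internally along the alternating edges of the stored cycle $C(S)$; a straightforward induction on the nesting depth of the laminar family $\mathcal{L}$ shows that the final $M^*$ covers every vertex of $G$ exactly once.

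Next come (D) and (CS), which I would establish as invariants maintained by Steps {\bf A}--{\bf B} and preserved by the termination expansion. For (CS) on blossoms, note that any blossom vertex $v(S)$ surviving to termination satisfies $\sum_{e\in\delta(v(S))}x_e=1$, because Step {\bf B}.(b) expands (and zeroes the duals of) every blossom whose contracted degree exceeds one; hence each $S\in\mathcal{L}$ is crossed by exactly one edge of $M^*$, i.e. $|\delta(S)\cap M^*|=1$. For (CS) on edges and for (D), the reduced weights $w^\dagger$ are precisely $w_e$ minus the dual contributions of the contracted structure, so dual feasibility on $E$ is equivalent to nonnegativity of $w^\dagger$ together with tightness of the edges chosen by the integral LP solution. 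The update in Step {\bf B}.(c), which sets $y_v=\tfrac12\sum_{e\in E(C)}(-1)^{d(e,v)}w^\dagger_e$, is exactly what makes every cycle edge tight ($w^\dagger_e-y_a-y_b=0$ for $e=(a,b)\in E(C)$) while keeping all other reduced weights nonnegative, and $y_S\ge 0$ follows from positivity of the weights. Thus every edge of $M^*$ is tight and all blossom duals are nonnegative.

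Granting (P), (D), (CS), the conclusion is immediate. By complementary slackness,
\[
w(M^*)=\sum_{e\in M^*}\Big(\sum_{v\in e}y_v+\sum_{S:\,e\in\delta(S)}y_S\Big)=\sum_{v\in V}y_v+\sum_{S\in\mathcal{L}}y_S ,
\]
using $|\delta(v)\cap M^*|=1$ and $|\delta(S)\cap M^*|=1$. By weak duality for the full (integral) Edmonds LP, this common value lower-bounds the optimum, which equals the minimum weight over perfect matchings; since $M^*$ is itself a perfect matching it attains this bound and is therefore optimal, which also matches IP \eqref{ip:matching}. I expect the main obstacle to be the bookkeeping in (D)/(CS): carefully tracking how the dual $y_S$ of a blossom is recorded, namely as the dual value carried by its contracted vertex $v(S)$, and showing that edge-tightness of matched edges survives the interleaved contractions and expansions across the entire laminar family, in particular verifying the invariant simultaneously at all nesting levels by induction.
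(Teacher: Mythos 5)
Your high-level plan---certifying optimality of $M^*$ through complementary slackness for Edmonds' LP \eqref{lp:edmonds}---is the same skeleton as the paper's proof, but your dual certificate is incomplete, and this is a genuine gap rather than bookkeeping. The parameters $(y_v,y_S)$ maintained by Blossom-LP carry meaningful values only for vertices and blossoms that have been contracted away (i.e., not in $V^\dagger$): $y_v$ is assigned in Step {\bf B}.(c) exactly for the vertices swallowed by a contraction, is reset to $0$ in Step {\bf B}.(b) upon expansion, and no dual value is ever stored by the algorithm for a vertex currently in $V^\dagger$, nor for an outer blossom. Hence your certificate (``maintained parameters, extended by zero'') assigns $0$ to every vertex of the final contracted graph. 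This fails already in the simplest instance: if the very first LP \eqref{lp:modifiededmonds} (with $G^\dagger=G$, $\mathcal{L}=\emptyset$) has an integral solution, your certificate is identically zero, complementary slackness would force $w_e=0$ on every matched edge---impossible since the weights are strictly positive---and the duality bound you obtain is the vacuous $w(M^*)\ge 0$. The same issue undermines your claim that dual feasibility on $E$ is equivalent to nonnegativity of $w^\dagger$ ``together with tightness of the edges chosen by the integral LP solution'': the matched edges of the final LP are tight with respect to that LP's \emph{dual} variables, not with respect to zero duals, so tightness in your sense ($w^\dagger_e=0$) simply does not hold in general.

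The missing ingredient, which is the crux of the paper's argument, is strong duality applied to the final LP \eqref{lp:modifiededmonds}: take an optimal dual solution $y^\dagger$ of the last contracted LP (its variables live on $V^\dagger$, including the outer blossom vertices $v(S)$, whose values serve as the outer blossom duals), and stitch it together with the stored values $y^\ddagger$ for contracted-away vertices and inner blossoms to form the full certificate $y^*=[y^\dagger~y^\ddagger]$. Complementary slackness of $(x^\dagger,y^\dagger)$ for the contracted LP, combined with the definition of $w^\dagger$, then yields exactly the edge-tightness condition for $M^*$ in \eqref{lp:edmonds}, while the expansion rule of the termination step enforces $\sum_{e\in\delta(S)}x^*_e=1$, giving the blossom slackness condition (this part of your argument, via Step {\bf B}.(b), is sound, as is your verification that $M^*$ is a perfect matching). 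So the fix is not more careful invariant-tracking across iterations, as you anticipate at the end of your proposal, but rather using the final LP solve itself as the source of the dual values on $V^\dagger$.
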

\begin{proof}
We 
let
$x^\dagger=[x_e^\dagger],y^\ddagger=[y^\ddagger_v,y^\ddagger_S:v\notin V^\dagger, v(S)\notin V^\dagger]$
denote the parameter values at the termination of Blossom-BP.
Then, the strong duality theorem and the complementary slackness condition imply that
\begin{equation}
x_e^\dagger(w^\dagger - y_u^\dagger-y_v^\dagger)=0,\quad \forall e=(u,v)\in E^\dagger.\label{eq:compslack}
\end{equation}
where $y^{\dagger}$ be a dual solution of $x^{\dagger}$.
Here, observe that $y^\dagger$ and $y^\ddagger$ cover $y$-variables inside and outside of $V^\dagger$, respectively.
Hence, one can naturally define $y^{*} = [y_{v}^{\dagger}~y_{u}^{\ddagger}]$ to cover all $y$-variables, i.e., $y_v,y_S$
for all $v\in V, S\in \mathcal L$.
If we define $x^{*}$ for the output matching $M^*$ of Blossom-LP as
$x^{*}_{e} =1$ if $e\in M^*$ and $x^*_e=0$ otherwise,
then $x^{*}$ and $y^{*}$ satisfy the following complementary slackness condition:
\begin{align*}
&x^{*}_e\left(w_e - y^{*}_{u} - y^{*}_{v} - \sum_{S\in\mathcal L}y^{*}_{S}\right)=0,\quad \forall e=(u,v)\in E,\qquad
y^{*}_S\left(\sum_{e\in\delta(S)}x^{*}_e-1\right)=0,\quad \forall S\in\mathcal L,
\end{align*}
where $\mathcal L$ is the last set of blossoms at the termination of Blossom-BP.
In the above, 
the first equality is from \eqref{eq:compslack} and the definition of $w^\dagger$,
and the second equality is because the construction of $M^*$ in Blossom-BP is designed to enforce $\sum_{e\in\delta(S)}x^{*}_e=1$.
This proves that $x^{*}$ is the optimal solution of LP \eqref{lp:edmonds} 
and $M^{*}$ is the minimum weight perfect matching, thus completing the proof of Lemma \ref{lem:optimality}.
\end{proof}

To guarantee the termination of Blossom-LP in polynomial time,  
we use the following notions.

\begin{definition}\label{def:claw}
Claw is a subset of edges such that every edge in it 
shares a common vertex, called center, with all other edges,
i.e., the claw forms a star graph.
\end{definition}
\begin{definition}\label{def:CLM}
Given a graph $G=(V,E)$, a set of odd cycles $\mathcal O\subset 2^E$, a set of claws $\mathcal W\subset 2^E$
and a matching $M\subset E$,
$(\mathcal O, \mathcal W, M)$ is called cycle-claw-matching decomposition of $G$ if
all sets in $\mathcal O\cup \mathcal W\cup \{M\}$ are disjoint and
each vertex $v\in V$ is covered by exactly one set among them.
\end{definition}
To analyze the running time of Blossom-BP,
we construct an iterative auxiliary algorithm that outputs 
the minimum weight perfect matching in a bounded number of iterations. 
The auxiliary algorithm outputs a cycle-claw-matching decomposition at each iteration,
and it terminates when the cycle-claw-matching decomposition corresponds to a perfect matching.
We will prove later that 
the auxiliary algorithm and Blossom-LP
are equivalent and, therefore, conclude that the iteration of Blossom-LP is also bounded.

To design the auxiliary algorithm, we consider the following dual of LP \eqref{lp:modifiededmonds}:
\begin{equation}\label{lp:modifiededmondsdual}
\begin{split}
\mbox{minimize}\qquad& \sum_{v\in V^\dagger} y_v\\
\mbox{subject to}\qquad&w^{\dagger}_e-y_v - y_u\geq 0,\quad\forall e=(u,v)\in E^\dagger,
\qquad y_{v(S)}\geq0,\quad \forall S\in \mathcal L.
\end{split}
\end{equation}
Next we introduce an auxiliary iterative algorithm which updates iteratively 
the blossom set $\mathcal L$ and also
the set of variables $y_{v}, y_S$ for $v\in V, S\in \mathcal L$.
We call edge $e=(u,v)$ `tight' if
$$w_{e} - y_{u} - y_{v}-\sum_{S\in\mathcal L:e\in\delta (S)}y_{S} = 0.$$
Now, we are ready to describe the auxiliary algorithm having the following parameters.
\begin{itemize}
\item[$\circ$] $G^{\dagger}=(V^\dagger, E^\dagger)$,
$\mathcal L\subset 2^V$, 
and $y_v, y_S$ for $v\in V, S\in \mathcal L$.
\item[$\circ$]  $(\mathcal O, \mathcal W, M)$: 
A cycle-claw-matching decomposition of $G^{\dagger}$
\item[$\circ$]  $T\subset G^\dagger$: A tree graph 
consisting of $+$ and $-$ vertices.
\end{itemize}
Initially, set $G^{\dagger}=G$ and $\mathcal L, T=\emptyset$. 
In addition, set $y_v, y_S$ by an optimal solution of LP \eqref{lp:modifiededmondsdual} with $w^{\dagger} = w$
and $(\mathcal O, \mathcal W, M)$ by 
the cycle-claw-matching decomposition of $G^{\dagger}$
consisting of tight edges with respect to $[y_v,y_S]$. The parameters are updated iteratively as follows.

\vspace{0.15in}

\noindent {\bf The auxiliary algorithm}
\hrule
\paragraph{Iterate the following steps until $M$ becomes a perfect matching:}
\begin{itemize}
\item[1.] Choose a vertex $r\in V^\dagger$ from the following rule.
\begin{itemize}
\item[] {\bf Expansion.} If $\mathcal W\neq\emptyset$, choose a claw $W\in\mathcal W$ of center blossom vertex $c$ and
choose a non-center vertex $r$ in $W$. 
Remove the blossom $S(c)$ corresponding to $c$ from $\mathcal L$ and update $G^{\dagger}$ by expanding it.
Find a matching $M^\prime$ covering all vertices in $W$ and $S(c)$ except for $r$
and update 
$M \leftarrow M\cup M^\prime$.
\vspace{0.05in}
\item[] {\bf Contraction.} Otherwise, 
choose a cycle $C \in \mathcal O$, add and remove it from $\mathcal L$ and $\mathcal O$, respectively.
In addition, $G^{\dagger}$ is also updated by contracting $C$ and
choose the contracted vertex $r$ in $G^\dagger$ and set $y_r=0$.
\end{itemize} 
Set tree graph $T$ having $r$ as $+$ vertex and no edge.
\item[2.] Continuously increase $y_{v}$ of every $+$ vertex $v$  in $T$
and decrease $y_{v}$ of $-$ vertex $v$ in $T$ by the same amount
until one of the following events occur:
\begin{itemize}
\item[] {\bf Grow.}
If a tight edge $ (u, v)$ exists where $u$ is a $+$ vertex of $T$ and $v$ is covered by $M$, 
find a tight edge $(v, w) \in M$. Add edges $(u,v),(v,w)$ to $T$ and remove $(v,w)$ from $M$
where $v,w$ becomes $-,+$ vertices of $T$, respectively.
\vspace{0.05in}
\item[] {\bf Matching.} 
If a tight edge $ (u, v)$ exists where $u$ is a $+$ vertex of $T$ and $v$ is covered by $C\in \mathcal O$, 
find a matching $M^\prime$ that covers $T\cup C$.
Update $M \leftarrow M\cup M^\prime$ 
and remove $C$ from $\mathcal O$.
\vspace{0.05in}
\item[] {\bf Cycle.}
If a tight edge $(u, v)$ exists where $u,v$ are $+$ vertices of $T$,
find a cycle $C$ and a matching $M^\prime$ that covers $T$.
Update $M\leftarrow M\cup M^\prime$ and
add $C$ to $\mathcal O$.
\vspace{0.05in}
\item[] {\bf Claw.}
If a blossom vertex $v(S)$ with $y_{v(S)}=0$ exists,
find a claw $W$ (of center $v(S)$) and a matching $M^\prime$ covering $T$.
Update $M \leftarrow M\cup M^\prime$ and  add
$W$ to $\mathcal W$. 
\end{itemize}
If {\bf Grow} occurs, resume the step 2. Otherwise, go to the step 1.
\end{itemize}
\vspace{0.1in}
\hrule
\vspace{0.1in}
\begin{figure}[h]
\centering
\subfigure[Grow]{
	\includegraphics[scale=0.4]{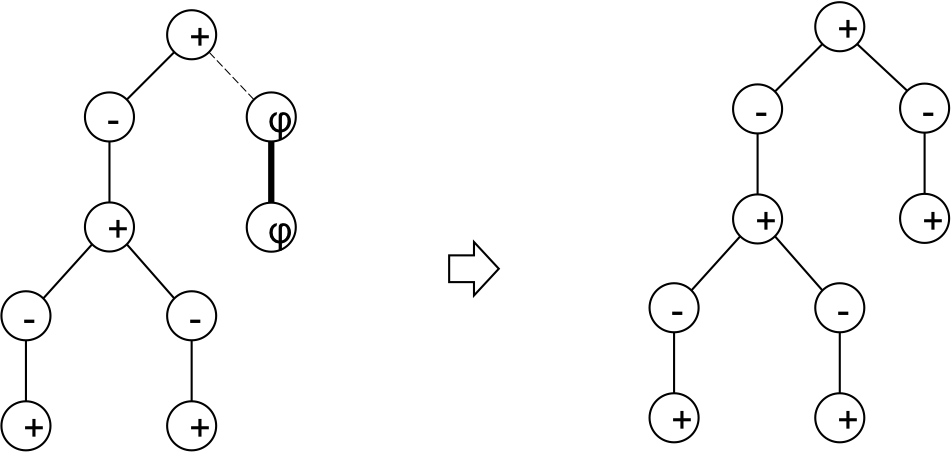}
	\label{fig:aux_grow}
}
\subfigure[Matching]{
	\includegraphics[scale=0.4]{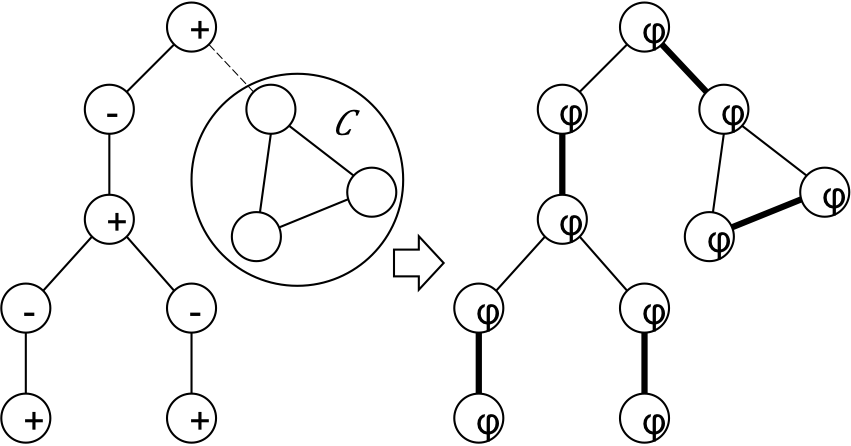}
	\label{fig:aux_aug}
}
\subfigure[Cycle]{
	\includegraphics[scale=0.4]{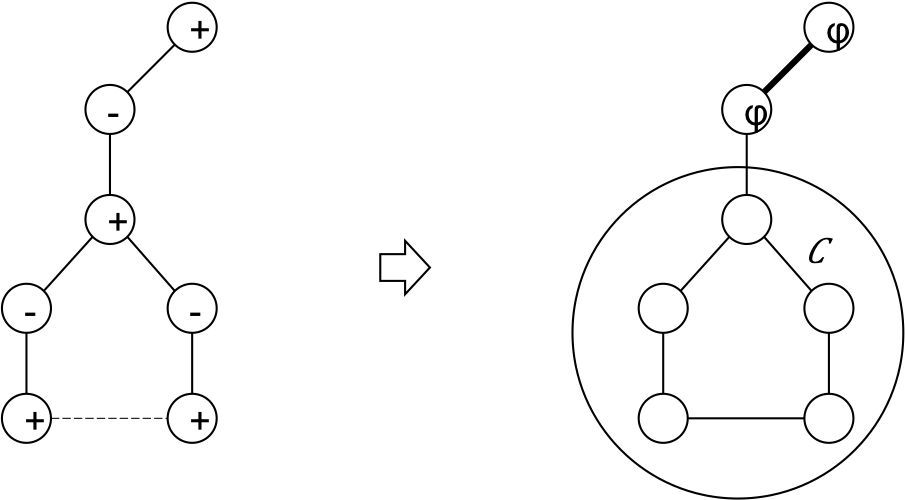}
	\label{fig:aux_shrink}
}
\subfigure[Claw]{
	\includegraphics[scale=0.4]{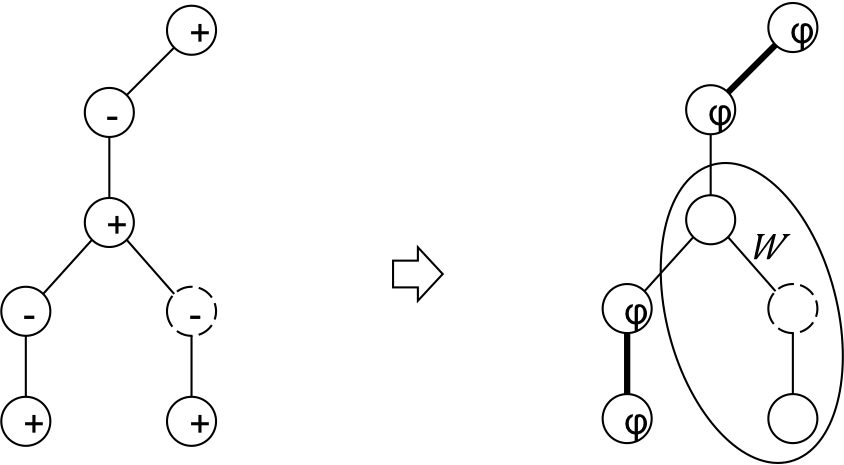}
	\label{fig:aux_expand}
}
\caption{Illustration of four possible executions of Step 2 of the auxiliary algorithm.
Here, we use $\varphi$ to label vertices covered by matching $M$ appearing at the intermediate steps of 
the auxiliary algorithm.}
\end{figure}
Note that the auxiliary algorithm updates parameters in such a way that 
the number of vertices in every claw in the cycle-claw-matching decomposition is $3$ since every
$-$ vertex has degree $2$.
Hence, there exists a unique matching $M^\prime$ in the expansion step.
Furthermore, the existence of a cycle-claw-matching decomposition
at the initialization can be guaranteed
using the complementary slackness condition and the half-integrality
of LP \eqref{lp:modifiededmonds}.
We establish the following lemma for the running time of the auxiliary algorithm.\\
\begin{lemma}\label{lem:auxrunningtime}
The auxiliary algorithm terminates in 
$O(|V|^{2})$ iterations.
\end{lemma}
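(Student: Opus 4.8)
The plan is to count the while-loop iterations---each iteration being one pass through Step~1 and Step~2 (including its \textbf{Grow} resumptions), which I will call a \emph{phase}---by a two-level argument: show that at most $O(|V|)$ phases end in a \textbf{Matching} event, and that between two consecutive such phases at most $O(|V|)$ further phases occur, so the total is $O(|V|^2)$. Throughout I use that $G^\dagger$ is obtained from $G$ by contracting disjoint vertex sets, so $|V^\dagger|\le|V|$ at all times, and that a phase's Step~1 is an \textbf{Expansion} or a \textbf{Contraction} while its Step~2 terminates in \textbf{Matching}, \textbf{Cycle}, or \textbf{Claw}.

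For the outer count I track the \emph{deficiency} $D=|\mathcal{O}|+|\mathcal{W}|$, the number of odd cycles plus claws in the current cycle-claw-matching decomposition; clearly $D\le|V|$ initially and $D=0$ at termination (a perfect matching leaves $\mathcal{O},\mathcal{W}$ empty). In every phase, Step~1 deletes exactly one cycle (\textbf{Contraction}) or one claw (\textbf{Expansion}), while Step~2 either creates one cycle, creates one claw, or deletes one cycle. Hence a \textbf{Cycle} phase and a \textbf{Claw} phase leave $D$ unchanged, whereas a \textbf{Matching} phase deletes one set in Step~1 and one cycle in Step~2 and so lowers $D$ by exactly $2$. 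Since $D$ never increases and starts below $|V|$, at most $|V|/2=O(|V|)$ phases are \textbf{Matching} phases, and therefore there are $O(|V|)$ maximal runs of phases between consecutive \textbf{Matching} phases; call each such run a \emph{search}.

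The main obstacle is to bound the number of phases inside one search by $O(|V|)$. Each non-\textbf{Matching} phase performs either a \emph{shrink} (a \textbf{Cycle} event whose new odd cycle is contracted into the next root by a \textbf{Contraction}) or an \emph{expand} (a \textbf{Claw} event whose center blossom is dissolved by an \textbf{Expansion}). The classical primal-dual invariant I would establish is that, within a single search, once a base vertex acquires the $+$ label it keeps it---possibly as an interior vertex of a blossom---until that search's augmentation; equivalently, each vertex becomes $+$ at most once per search. Granting this, I count operations by $+$-labelings: a \textbf{Grow} turns one $M$-covered vertex into a new $+$ vertex; a shrink turns every $-$ vertex of the contracted odd cycle (there is at least one, since the cycle is a tree path between two $+$ vertices closed by a tight edge) into a $+$ vertex; and an expand turns at least one former interior vertex into a $+$ vertex. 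Every \textbf{Grow}, shrink, and expand thus creates a fresh $+$ label, so their total in one search is at most $|V|$, and in particular the number of non-\textbf{Matching} phases (the shrinks and expands) is $O(|V|)$; this is consistent with the fact that expands can only dissolve members of the laminar family $\mathcal{L}$, of which there are at most $2|V|-1$. Verifying the ``$+$ once'' invariant across the auxiliary algorithm's tree \emph{resets}---after each \textbf{Cycle} or \textbf{Claw} the tree is rebuilt from the new root, yet the $+$ affiliation of vertices already absorbed into the root blossom must be shown to persist---is the delicate heart of the argument, and is exactly where the cycle/claw/tree bookkeeping plays the role of the tree bookkeeping in Blossom-V.

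Finally, each phase is itself finite: a \textbf{Grow} pulls one matched edge into the tree, adding two vertices that are not removed before the phase ends, and the tree lives inside $G^\dagger$ with at most $|V^\dagger|\le|V|$ vertices, so a phase resumes Step~2 at most $|V|/2$ times. (This only certifies that a phase is well defined; \textbf{Grow}s are not counted as separate iterations.) Combining the two levels, there are $O(|V|)$ searches, each contributing $O(|V|)$ phases, for $O(|V|^2)$ phases in total, which is the claimed number of iterations.
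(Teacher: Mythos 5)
Your outer count is exactly the paper's: tracking $|\mathcal{O}|+|\mathcal{W}|$, noting it never increases and drops by two precisely at a \textbf{Matching} event, which gives $O(|V|)$ stages (your ``searches''). The gap is in the inner count, and you have correctly located it yourself: the ``$+$ once'' invariant is asserted but never proven, and it is precisely the hard part of the lemma. Worse, the appeal to the ``classical primal-dual invariant'' does not transfer to this algorithm: unlike Blossom-V, where alternating trees persist across dual updates, the auxiliary algorithm destroys the tree and rebuilds it from a fresh root at the start of \emph{every} phase (after each \textbf{Cycle}, \textbf{Claw}, or \textbf{Matching}). Consequently vertices \emph{can} be re-labeled $+$ in several phases of the same stage (the tree re-grows along old tight paths), so your accounting ``every \textbf{Grow}, shrink, and expand creates a fresh $+$ label, hence at most $|V|$ operations per search'' is not sound as stated. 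What the paper actually proves is the weaker invariant $\spadesuit$: a vertex that is $+$ at some iteration never becomes $-$ later in the same stage; the phase bound then follows because a blossom becomes $+$ upon contraction and must be $-$ just before expansion, so no blossom is both contracted and expanded within one stage, bounding contractions plus expansions (hence phases) by $O(|V|)$.

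The second missing ingredient is \emph{why} $\spadesuit$ holds, and here the paper's argument is not combinatorial bookkeeping but probabilistic: it rests on two claims proven via the random perturbation of the edge weights. Claim \ref{claim1} shows that no path of tight edges can connect two odd cycles (or zero-dual blossom vertices), because such a path would force a linear identity among the perturbed weights with coefficients determined by the finite history of $G^\dagger$ --- an event of probability zero under continuous noise. Claim \ref{claim2} then shows a former $+$ vertex stays connected to a cycle of $\mathcal{O}$ by an even tight alternating path, so if it were ever to be made $-$, the tree would be forced to \textbf{Grow} along that path straight to a \textbf{Matching}, contradicting that the stage has not ended. Your proposal never invokes the random noise at all, yet without it the key invariant can fail; any complete proof along your outline would have to supply these two claims (or a substitute for them), which is where the real work of the lemma lies.
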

\begin{proof}
To this end, 
let $(\mathcal O, \mathcal W, M)$ be the cycle-claw-matching decomposition of $G^{\dagger}$
and $N = |\mathcal O| + |\mathcal W|$ at some iteration of the algorithm. 
We first prove that $|\mathcal O| + |\mathcal W|$ does not increase at every iteration.
At Step 1, the algorithm deletes an element in either $\mathcal O$ or $\mathcal W$ 
and hence, 
$|\mathcal O| + |\mathcal W| = N - 1$.
On the other hand,
at Step 2, one can observe that the algorithm run into one of the following scenarios with respect to $|\mathcal O| + |\mathcal W|$:
\begin{itemize}
\item[]{\bf Grow.} $|\mathcal O| + |\mathcal W| = N -1$
\item[]{\bf Matching.}  $|\mathcal O| + |\mathcal W| = N-2$
\item[]{\bf Cycle.} $|\mathcal O| + |\mathcal W| = N$
\item[]{\bf Claw.} $|\mathcal O| + |\mathcal W| = N$
\end{itemize}
Therefore, the total number of odd cycles and claws at Step 2 does not increase as well.

From now on, we define $\{t_1,t_2,\dots:t_i\in \mathbb Z\}$ to be
indexes of iterations when {\bf Matching} occurs at Step 2, and
we call the set of iterations $\{t:t_i\leq t< t_{i+1}\}$ as the $i$-th {\em stage}.
We will show that the length of each stage is $O(|V|)$, i.e., for all $i$,
\begin{equation}
|t_i-t_{i+1}|=O(|V|).\label{eq:lenstage}
\end{equation} 
This implies
that the auxiliary algorithm terminates in $O(|V|^2)$ iterations
since the total number of odd cycles and claws at the initialization is $O(|V|)$
and it decrease by two if {\bf Matching} occurs.
To this end, we prove the following key lemmas{, which are proven in the appendix}.
\begin{claim}\label{claim1}
At every iteration of the auxiliary algorithm,
there exist no path consisting of tight edges between two vertices $v_1,v_2\in V^\dagger$  
where each $v_i$ is either a blossom vertex $v(S)$ with $y_{S} = 0$ or a (blossom or non-blossom) vertex in an odd cycle
consisted of tight edges. 
\end{claim}
\begin{claim}\label{claim2}
Consider a $+$ vertex $v\in V^\dagger$ at some iteration of the  auxiliary algorithm.
Then,  
at the first iteration afterward where $v$ becomes a $-$ vertex or is removed from $V^\dagger$ (i.e., due to the contraction of a blossom), it 
is connected 
to an odd cycle $C\in \mathcal O$ 
via an even-sized alternating  path consisting of tight edges with respect to matching $M$
whenever each iteration starts during the same stage.
Here, $\mathcal O$ and $M$ are from the cycle-claw-decomposition.
\end{claim}
Now we aim for proving \eqref{eq:lenstage}.
To this end, 
we 
claim the following.
\begin{itemize}
\item[$\spadesuit$]
A $+$ vertex of $V^\dagger$ at some iteration cannot be a $-$ one (whenever it appears in $V^\dagger$) afterward in the same stage.
\end{itemize}
For proving $\spadesuit$,
we assume that a $+$ vertex $v\in V^\dagger$ at the $t$-th iteration violates $\spadesuit$ to derive a contradiction, i.e., 
it becomes a $-$ one in some tree $T$ during $t^\prime$-th iteration in the same stage.
Without loss of generality, one can assume that
the vertex $v$ has the minimum value of $t^\prime - t$ among such vertices violating $\spadesuit$.
We consider two cases: (a) $v$ is always contained in $V^\dagger$ afterward in the same stage,
and (b) $v$ is removed from $V^{\dagger}$ (at least once, due to the contraction of a blossom containing $v$) 
afterward in the same stage.
First consider the case (a).
Then, due to the assumption of the case (a) and Claim \ref{claim2}, 
there exist a path $P$ from $v$ to a cycle $C\in\mathcal O$  when the $t^\prime$-th iteration starts.
Then, one can observe that 
in order to add $v$ to tree $T$ as a $-$ vertex,
it must be the first vertex in path $P$ added to $T$ by {\bf Grow} during the $t$-iteration.
Furthermore,
tree $T$ keeps continuing 
to perform {\bf Grow} afterward using tight edges of path $P$
without modifying parameter $y$ until {\bf Matching} occurs, i.e., the new stage starts.
This is because {\bf Claw} and {\bf Cycle} are impossible to occur before {\bf Matching} due
to Claim \ref{claim1}.
Hence, it contradicts to the assumption that $t$ and $t^\prime$ are in the same stage,
and completes the proof of $\spadesuit$ for the case (a).
Now we consider the case (b), i.e., $v$ is removed from $V^{\dagger}$
due to the contraction of a blossom $S\in \mathcal L$.
In this case, the blossom vertex $v(S)\in V^{\dagger}$ must be expanded before $v$ becomes a $-$ vertex.
However, $v(S)$ becomes a $+$ vertex after contracting $S$ 
and a $-$ vertex before expanding $v(S)$, i.e., $v(S)$ also violates $\spadesuit$.
This contradicts to the assumption that
the vertex $v$ has the minimum value of $t^\prime - t$ among vertices violating $\spadesuit$,
and completes the proof of $\spadesuit$.
Due to $\spadesuit$, a blossom cannot expand after contraction in the same stage, where
we remind that a blossom vertex becomes a $+$ one after contraction and 
a $-$ one before expansion.
This implies that 
the number contractions and expansions in the same stage is $O(|V|)$, which
leads to \eqref{eq:lenstage} and completes the proof of Lemma \ref{lem:auxrunningtime}. 
\end{proof}
Now we are ready to prove the equivalence between the auxiliary algorithm and the
Blossom-LP, 
i.e., prove that the numbers
 of iterations of Blossom-LP and 
 the auxiliary algorithm are equal.
To this end, given a cycle-claw-matching decomposition $(\mathcal O,\mathcal W,M)$,
observe that one can choose the corresponding $x=[x_e]\in \{0,1/2,1\}^{|E^\dagger|}$
that satisfies constraints of LP \eqref{lp:modifiededmonds}:
$$x_e=\begin{cases}
1&\mbox{if $e$ is an edge in $\mathcal W$ or $M$}\\
\frac{1}{2}&\mbox{if $e$ is an edge in $\mathcal O$}\\
0&\mbox{otherwise}
\end{cases}.$$
Similarly, given a half-integral $x=[x_e]\in \{0,1/2,1\}^{|E^\dagger|}$
that satisfies constraints of LP \eqref{lp:modifiededmonds}, one can find the corresponding 
cycle-claw-matching decomposition.
Furthermore, one can also define weight $w^\dagger$ in $G^\dagger$ for the auxiliary algorithm 
as Blossom-LP does:
\begin{equation}\label{eq:wdagger}
w_{e}^{\dagger} =  w_e-
\sum_{v\in V: v\not\in V^{\dagger},e\in \delta (v)}y_v
-\sum_{S\in \mathcal L: v(S)\not\in V^{\dagger},e\in \delta (S)}y_S, 
\qquad \forall~e\in E^{\dagger}.
\end{equation}
In the auxiliary algorithm, $e=(u,v)\in E^\dagger$ is tight if and only if
$$w^{\dagger}_e - y^{\dagger}_{u} - y^{\dagger}_{v}=0.$$
Under these equivalences in parameters between Blossom-LP and the auxiliary algorithm, 
we will use the induction to show that cycle-claw-matching decompositions
maintained by both algorithms are equal at every iteration, as stated in the following lemma.

\begin{lemma}\label{lem:equivalence}
Define the following notation:
\begin{align*}
y^{\dagger} = [y_{v}: v \in V^{\dagger}]\qquad\mbox{and}\qquad
y^{\ddagger} = [y_{v},y_S: v\in V,v \not\in V^{\dagger}, S\in \mathcal L, v(S)\notin V^\dagger],
\end{align*}
i.e., $y^\dagger$ and $y^{\ddagger}$ are parts of $y$ which involves and does not involve in $V^\dagger$, respectively.
Then, the Blossom-LP and the auxiliary algorithm update parameters
$\mathcal L, y^{\ddagger}$ equivalently and output the
same cycle-claw-decomposition of $G^{\dagger}$ at each iteration.
\end{lemma}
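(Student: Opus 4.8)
The plan is to prove Lemma~\ref{lem:equivalence} by induction on the iteration count, showing that at the start of every iteration both algorithms share the identical triple $(\mathcal L, y^\ddagger, (\mathcal O,\mathcal W,M))$ and hence the identical contracted graph $G^\dagger$ with identical weights $w^\dagger$ via \eqref{eq:wdagger}. The base case is the initialization: both start with $\mathcal L=\emptyset$, $G^\dagger=G$, $w^\dagger=w$, and $y^\ddagger$ empty; the cycle-claw-matching decomposition arising from the half-integral LP solution (by Theorem~\ref{thm:halfintegral}) coincides with the one the auxiliary algorithm constructs from the tight edges of its dual-optimal $y$, so the two decompositions agree under the $x \leftrightarrow (\mathcal O,\mathcal W,M)$ correspondence already spelled out above. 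For the inductive step I would assume agreement at the start of iteration $k$ and verify that one full pass of Blossom-LP (Step~A solve, then one application of Step~B case (b) or (c)) produces exactly the same updated $(\mathcal L, y^\ddagger)$ as one full pass of the auxiliary algorithm (its Step~1 together with the sequence of Step~2 events up to and including the next decomposition-changing event).

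The key structural claim to establish is that the \emph{optimal solution} of LP~\eqref{lp:modifiededmonds} on the current $G^\dagger$ is precisely the half-integral point whose support is the auxiliary algorithm's cycle-claw-matching decomposition \emph{after} it completes its current round of $y$-adjustments. Concretely, I would argue that the auxiliary algorithm's continuous dual ascent in Step~2 — raising $y_v$ on $+$ vertices and lowering it on $-$ vertices until a Grow/Matching/Cycle/Claw event fires — is exactly a dual-simplex-style traversal that arrives at the vertex of the LP~\eqref{lp:modifiededmonds} polytope selected by Blossom-LP. The correspondence of events is the heart of the matter: a Contraction in Step~1 followed by the eventual Cycle event matches case~(c) of Blossom-LP (adding $V(C)$ to $\mathcal L$ and setting $y_v$ by the alternating-sum formula), while an Expansion in Step~1 matches case~(b) (removing a blossom $S$ with $\sum_{e\in\delta(v(S))}x_e>1$ and zeroing $y_v$ on $S$). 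I would check that the $y_v$ values the two algorithms assign to a newly contracted cycle agree: Blossom-LP sets $y_v = \tfrac12\sum_{e\in E(C)}(-1)^{d(e,v)}w^\dagger_e$, and I must verify this equals the value the dual variables take when the auxiliary algorithm's Cycle event fires, which follows from the tightness of all cycle edges together with the alternating parity of the odd cycle.

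The main obstacle I expect is the \emph{granularity mismatch} between the two algorithms: a single iteration of Blossom-LP solves a full LP and jumps directly to the next half-integral vertex, whereas the auxiliary algorithm reaches the same vertex only after a \emph{run} of consecutive Grow events (interleaving dual ascent with tree growth) before a decomposition-altering event occurs. To align them I would invoke the uniqueness obtained from the random weight perturbation $w_e \leftarrow w_e + n_e$ (so LP~\eqref{lp:modifiededmonds} has a unique optimum at every stage, via Theorem~\ref{thm:halfintegral} and Corollary~\ref{cor:matching}), which pins down the target vertex unambiguously and forces both algorithms to the same $(\mathcal L, y^\ddagger)$. The subtle point is that Grow events modify $y^\dagger$ (the part inside $V^\dagger$) but leave $\mathcal L$ and $y^\ddagger$ untouched, so they are invisible to the statement of the lemma; I would isolate this by observing that only Contraction/Expansion (Step~1) and Cycle/Claw/Matching (the terminal events of Step~2) change $\mathcal L$ or $y^\ddagger$, and that these are in bijection with cases (b) and (c) of Blossom-LP. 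Finally I would appeal to complementary slackness \eqref{eq:compslack} to confirm that Blossom-LP's primal optimum restricted to tight edges reproduces exactly the auxiliary algorithm's decomposition, closing the induction and yielding equal iteration counts; combined with Lemma~\ref{lem:auxrunningtime} this immediately gives the $O(|V|^2)$ bound of Theorem~\ref{thm:main}.
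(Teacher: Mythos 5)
Your proposal follows essentially the same route as the paper's proof: induction over iterations, showing $\mathcal{L}$ and $y^{\ddagger}$ are updated equivalently (with the contracted-cycle $y$-values pinned down by tightness of the cycle edges), and then certifying the auxiliary algorithm's cycle-claw-matching decomposition as the optimum of LP \eqref{lp:modifiededmonds} via complementary slackness with the dual-feasible $y^{\dagger}$, uniqueness from the random perturbation closing the argument. The only cosmetic differences are your explicit treatment of the granularity/phase alignment between the two algorithms and the dual-simplex framing, both of which the paper leaves implicit.
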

\begin{proof}
Initially, it is trivial. 
Now we assume the induction hypothesis that $\mathcal L, y^{\ddagger}$ and the cycle-claw-decomposition
are equivalent between both algorithms at the previous iteration.
First, it is easy to observe that
$\mathcal L$ is updated equivalently since it is only decided by 
the cycle-claw-decomposition at the previous iteration in both algorithms.
Next, it is also easy to check that $y^{\ddagger}$ is updated equivalently since
(a) if we remove a blossom $S$ from $\mathcal L$, it is trivial
and (b) if we add a blossom $S=V(C)$ for some cycle $C$ to $\mathcal L$, 
$y^{\ddagger}$ is uniquely decided by $C$ and $w^\dagger$ in both algorithms.

In the remaining of this section, we will show that
once $\mathcal L, y^\ddagger$ are updated equivalently,
the cycle-claw-decomposition also changes equivalently in both algorithms. Observe that $G^\dagger, w^\dagger$
only depends on $\mathcal L, y^\ddagger$.
In addition, $y^{\dagger}$ maintained by the auxiliary algorithm also satisfies
constraints of LP \eqref{lp:modifiededmondsdual}.
Consider the cycle-claw-matching decomposition $(\mathcal O,\mathcal W,M)$ of the auxiliary algorithm,
and the corresponding $x=[x_e]\in \{0,1/2,1\}^{|E^\dagger|}$ that satisfies constraints of
LP \eqref{lp:modifiededmonds}. 
Then, $x$ and $y^{\dagger}$ satisfy 
the complementary slackness condition:
\begin{align*}
&x_e(w^{\dagger}_e - y^{\dagger}_{u} - y^{\dagger}_{v})=0,~\qquad \forall e=(u,v)\in E^{\dagger}\\
&y^{\dagger}_{v(S)}\left(\sum_{e\in\delta(v(S))}x_e-1\right)=0,~\qquad \forall S\in\mathcal L,
\end{align*}
where the first equality is because 
the cycle-claw-matching decomposition consists of tight edges
and the second equality is because 
every claw maintained by the auxiliary algorithm
has its center vertex $v(S)$ with
$y_{v(S)}=0$ for some $S\in \mathcal L$.
Therefore, $x$ is an optimal solution of LP \eqref{lp:modifiededmonds}, i.e., 
the cycle-claw-decomposition is updated equivalently in both algorithms.
This completes the proof of Lemma \ref{lem:equivalence}.
\end{proof}
The above lemma implies that 
Blossom-LP also terminates in $O(|V|^{2})$ iterations due to Lemma \ref{lem:auxrunningtime}.
This completes the proof of Theorem \ref{thm:main}.
The equivalence between the half-integral solution of LP \eqref{lp:modifiededmonds}
in Blossom-LP and
the cycle-claw-matching
decomposition in the auxiliary algorithm
implies that LP \eqref{lp:modifiededmonds} is always has a half-integral solution, and hence,
one of Steps {\bf B.}(a), {\bf B.}(b) or {\bf B.}(c) always occurs.

\vspace{-0.05in}
\section{Conclusion}
The BP algorithm has been popular for approximating inference solutions arising in graphical
models, where its distributed implementation, associated ease of programming and strong parallelization
potential are the main reasons for its growing popularity. This paper aims for designing a polynomial-time
BP-based scheme solving the minimum weigh perfect matching problem.
We believe that our approach is of a
broader interest to advance the challenge of designing 
BP-based MAP solvers in more general GMs as well as
distributed (and parallel) solvers for large-scale IPs.

{
\begingroup
\subsubsection*{References}
\renewcommand{\section}[2]{}

\endgroup}

\appendix

\section{Proof of Lemma \ref{lem:halfinv}}
For the proof of Lemma \ref{lem:halfinv}, suppose there exists a row in $A$ with one non-zero entry.
Then, one can assume that it is the first row of $A$ and $A_{11}=1$ without loss of generality.
Hence, $A^{-1}_{11}=1$, $A^{-1}_{1i}=0$ for $i\ne 1$ and
the first column of $A^{-1}$ has only $0$ and $\pm 1$ entries since each row of $A$
has at most two non-zero entries.
This means that one can proceed the proof of Lemma \ref{lem:halfinv} for the submatrix of $A$
deleting the first row and column.
Therefore, one can assume that
each row of $A$ contains exactly two non-zero entries.

We construct a graph $G=(V,E)$ such that
\begin{align*}
&V=[m]:=\{1,2,\dots,m\}\qquad\mbox{and}\qquad E=\{(j,k):a_{ij}=a_{ik}=1~\text{for some}~i\in V\},
\end{align*}
i.e., each row $A_{i[m]}=(A_{i1},\dots,A_{im})$ and
each column 
$A_{[m]i}=(A_{1i},\dots,A_{mi})^T$ correspond to an edge and a vertex of $G$, respectively.
Since $A$ is invertible, 
one can notice that $G$ does not contain an even cycle as well as
a path between two distinct odd cycles (including two odd cycles share a vertex). 
Therefore, each connected component of $G$ has at most one odd cycle.
Consider the $i$-th column $A^{-1}_{[m]i}=(A^{-1}_{1i},\dots,A^{-1}_{mi})^T$ of $A^{-1}$ and we have
\begin{equation}\label{eq:inv}
A_{i[m]}A^{-1}_{[m]i}=1\qquad\text{and}\qquad A_{j[m]}A^{-1}_{[m]i}=0\quad \text{for}~j\ne i,
\end{equation}
i.e., $A^{-1}_{[m]i}$ assigns some values on $V$ such that 
the sum of values on two end-vertices of the edge 
corresponding to the $k$-th row of $A$
is $1$ and $0$ if $k=i$ and $k\neq i$, respectively. 

Let $e=(u,v)\in E$ be the edge corresponding 
to the $i$-th row of $A$.
\begin{itemize}
\item First, consider the case when $e$ is not in an odd cycle of $G$.
Since each component of $G$ contains at most one odd cycle, 
one can assume that
the component of $u$ is a tree in the graph $G\setminus e$.
We will find the entries of $A^{-1}$ satisfying \eqref{eq:inv}.
Choose $A^{-1}_{wi}=0$ for all vertex $w$ not 
in the component. 
and $A^{-1}_{ui}=1$.
Since the component forms a tree, 
one can set $A^{-1}_{wi}=1~\text{or}~-1$ for every vertex $w\neq u$ in the component to satisfy \eqref{eq:inv}.
This implies that $A^{-1}_{[m]i}$ consists of $0$ and $\pm 1$.
\item Second, consider the case when $e$ is in an odd cycle of $G$.
We will again find the entries of $A^{-1}$ satisfying \eqref{eq:inv}.
Choose $A^{-1}_{ui}=A^{-1}_{vi}=\frac{1}{2}$ and $A^{-1}_{wi}=0$ for every
vertex $w$ not in the component containing $e$. 
Then, one can choose $A^{-1}_{[m]i}$ satisfying \eqref{eq:inv} by assigning 
$A^{-1}_{wi}=\frac{1}{2}~\text{or}~-\frac{1}{2}$ for vertex $w\neq u,v$ in the component containing $e$.
Therefore, $A^{-1}_{[m]i}$ consists of $0$ and $\pm\frac{1}{2}$.
\end{itemize}
This completes the proof of Lemma \ref{lem:halfinv}.

\section{Proof of Corollary \ref{cor:matching}}\label{sec:pf:cor:matching}
The proof of Corollary \ref{cor:matching} will be completed using Theorem \ref{thm:bplp}.
If LP \eqref{lp:modifiededmonds-1} has a unique solution,
LP \eqref{lp:modifiededmonds-1} has a unique and integral solution by Theorem \ref{thm:halfintegral},
 i.e., Condition {\it C1} of Theorem \ref{thm:bplp}.
LP \eqref{lp:modifiededmonds-1} satisfies Condition {\it C2} as each edge is incident with two vertices. 
Now, we need to prove that LP \eqref{lp:modifiededmonds-1} satisfies Condition {\it C3} of Theorem \ref{thm:bplp}.
Let $x^*$ be a unique optimal solution of LP \eqref{lp:modifiededmonds-1}.
Suppose $v$ is a non-blossom vertex and $\psi_v(x_{\delta(v)})=1$ for some 
$x_{\delta(v)}\ne x^*_{\delta(v)}$.
If $x_e\ne x^*_e=1$ for $e\in\delta(v)$, there exist $f\in\delta(v)$ such that
$x_f\ne x^*_f=0$. Similarly, If $x_e\ne x^*_e=0$ for $e\in\delta(v)$, there exists 
$f\in\delta(v)$ such that $x_f\ne x^*_f=1$. Then, it follows that
$$
\psi_v(x^\prime_\delta(v))=1,\qquad
\mbox{
where 
$x^\prime_{e^\prime} = \begin{cases}
		x_{e^\prime}~&\mbox{if}~e^\prime\notin \{e,f\}\\
		x^*_{e^\prime}~&\mbox{otherwise}
\end{cases}$.}$$
$$
\psi_v(x^\prime_\delta(v))=1,\qquad
\mbox{
where 
$x^\prime_{e^\prime} = \begin{cases}
		x_{e^\prime}~&\mbox{if}~e^\prime\in \{e,f\}\\
		x^*_{e^\prime}~&\mbox{otherwise}
\end{cases}$.}$$
Suppose $v$ is a blossom vertex and $\psi_v(x_{\delta(v)})=1$ for some 
$x_{\delta(v)}\ne x^*_{\delta(v)}$.
If $x_e\ne x^*_e=1$ for $e\in\delta(v)$, choose $f\in\delta(v)$ such that
$x_f\ne x^*_f=0$ if it exists. Otherwise, choose $f=e$.
Similarly, If $x_e\ne x^*_e=0$ for $e\in\delta(v)$, choose 
$f\in\delta(v)$ such that $x_f\ne x^*_f=1$ if it exists.
Otherwise, choose $f=e$. Then, it follows that
$$
\psi_v(x^\prime_\delta(v))=1,\qquad
\mbox{
where 
$x^\prime_{e^\prime} = \begin{cases}
		x_{e^\prime}~&\mbox{if}~e^\prime\notin \{e,f\}\\
		x^*_{e^\prime}~&\mbox{otherwise}
\end{cases}$.}$$
$$
\psi_v(x^\prime_\delta(v))=1,\qquad
\mbox{
where 
$x^\prime_{e^\prime} = \begin{cases}
		x_{e^\prime}~&\mbox{if}~e^\prime\in \{e,f\}\\
		x^*_{e^\prime}~&\mbox{otherwise}
\end{cases}$.}$$

\section{Proof of Claim \ref{claim1}}
First observe that $w^{\dagger}$ (see \eqref{eq:wdagger} for its definition)
is updated only at {\bf Contraction} and {\bf Expansion} of Step 1.
If {\bf Contraction} occurs, 
there exist a cycle $C$ to be contracted before Step 1.
Then one can observe that before the contraction, for every vertex $v$ in $C$, 
$y_{v}$ is expressed as a linear combination of $w^{\dagger}$:
\begin{equation}
y_{v} = {1\over 2}\sum_{e\in E(C)}(-1)^{d_{C}(e,v)}w^{\dagger}_{e},\label{eq:tightcycle}
\end{equation}
where $d_{C}(v,e)$ is the graph distance from vertex $v$ to edge $e$ in the odd cycle $C$.
Moreover $w^{\dagger}$ is updated after the contraction as
\begin{equation*}
\begin{cases}
w^{\dagger}_e \leftarrow w^{\dagger}_e - y_{v} \qquad \text{if $v$ is in the cycle $C$ and}~ e\in \delta(v)\\
w^{\dagger}_e \leftarrow w^{\dagger}_e \qquad\qquad \text{otherwise}
\end{cases}.
\end{equation*}
Thus the updated value $w^{\dagger}_e$ can be expressed as a linear combination of the old values $w^{\dagger}$ 
where each coefficient is uniquely determined by $G^\dagger$.
One can show the same conclusion similarly when
{\bf Expansion} occurs.
Therefore one conclude the following.
\begin{itemize}
\item[$\clubsuit$] Each value $w^{\dagger}_e$ at any iteration can be expressed as a linear combination of 
the original weight values $w$ where each coefficient is uniquely determined by the prior history in $G^\dagger$.
\end{itemize}


To derive a contradiction, we assume there exist a path $P$ 
consisting of tight edges
between two vertices $v_{1}$ and $v_{2}$ where each $v_{i}$ is either a blossom vertex $v(S)$ with
$y_S = 0$ or a vertex in an odd cycle consisting of tight edges. 
Consider the case where $v_1$ and $v_2$ are in cycle $C_1$ and $C_2$ consisting of tight edges, where
other cases can be argued similarly.
Then one can observe that there exists a linear relationship between $y_{v}$ and $y_{u}$ and $w^{\dagger}$:
\begin{equation}\label{eq:tightpath}
y_{v_1} + (-1)^{d_{P}(v_2, v_1)}y_{v_2} = \sum_{e \in P}(-1)^{d_{P}(e,v_1)}w_{e}^{\dagger}
\end{equation}
where $d_{P}(v_2, v_1)$ and $d_{P}(e, v_1)$ is the graph distance from $v_1$ to $v_2$ and $e$, respectively,
in the path $P$.
Since $v_1, v_2$ are in cycles $C_1,C_2$, respectively, we can apply 
\eqref{eq:tightcycle}.
From this observation, \eqref{eq:tightpath} and $\clubsuit$,
there exists a linear relationship among the original weight values $w$,
where each coefficient is uniquely determined by the prior history in $G^\dagger$.
This is impossible since
the number of possible scenarios in the history of $G^\dagger$ is finite, whereas
we add continuous random noises to $w$. 
This completes the proof of Claim \ref{claim1}.

\section{Proof of Claim \ref{claim2}}
To this end, suppose that a $+$ vertex $v$ at the $t^\dagger$-th iteration 
first becomes a $-$ vertex or is removed from $V^\dagger$ at the $t^\ddagger$-th iteration
where $t^\dagger$, $t^\ddagger$-th iterations are in the same stage.
First observe that if $v$ is removed from $G^{\dagger}$ at the $t^\ddagger$-th iteration,
there exist a cycle in $\mathcal O$ that includes it at the start of the $t^\ddagger$-th iteration, 
resulting a zero-sized alternating path
between such vertex and cycle, i.e., the conclusion of Lemma \ref{claim2} holds.
Now, for the other case, i.e., $v$ becomes a $-$ vertex at the $t^\ddagger$-th iteration, we will prove the following.
\begin{itemize}
\item[$\bigstar$]
For any $t$-th iteration with $t^\dagger\leq t<t^\ddagger$, one of the followings holds:
\begin{itemize}
\item[1.] The vertex $v$ becomes a $+$ vertex during the $t$-th iteration.
Moreover, $v$ either becomes a $+$ vertex during the $(t+1)$-th iteration or
$v$ becomes connected to some cycle $C$ in $\mathcal O$ 
via an even-sized alternating path $P$
consisting of tight edges
at the start of $(t+1)$-th iteration.
\item[2.] The vertex $v$ is not in the tree $T$ during the $t$-th iteration.
Moreover, if $v$ is connected to some cycle $C$ in $\mathcal O$ 
via an even-sized alternating path $P$
consisting of tight edges
at the start of $t$-th iteration, 
$v$ remains connected to cycle $C$  in $\mathcal O$ via 
an even-sized alternating path $P$ consisted of tight edges 
at the start of $(t+1)$-th iteration, i.e. the algorithm parameters associated with $P$ and $C$ are not updated
during the $t$-th iteration.
\end{itemize}
\end{itemize}
For $\bigstar - 1$, observe that if $v$ becomes a $+$ vertex during the $t$-th iteration, 
the iteration terminates with one of the following scenarios:
\begin{itemize}
\item[I.] The iteration terminates with {\bf Matching}. 
This contradicts to the assumption that $t^\dagger$, $t^\ddagger$-th iterations are in the same stage, i.e.,
no {\bf Matching} occurs during the $t$-th iteration.
\item[II.] The iteration terminates with {\bf Cycle}. 
The vertex $v$ is connected to the cycle newly added to $\mathcal O$ 
via an even-sized alternating path 
consisting of tight edges in tree $T$ at the start of the next (i.e., $(t+1)$-th) iteration.
\item[III.] The iteration terminates with {\bf Claw}. 
The vertex $v$ becomes a $+$ vertex of  
tree $T$ of the next (i.e., $(t+1)$-th) iteration.
This is due to the following reasons.
After {\bf Claw}, the 
algorithm expands the center vertex of 
newly made claw $W$ by {\bf Expansion} 
in the next iteration. 
Then, there exists an even-sized 
alternating path $P_{W}$ from
$r$ to $v$ consisted of tight edges in 
the newly constructed tree $T$.
Furthermore, edges in $P_{W}$ are continuously added to $T$
by {\bf Grow} without modifying parameter $y$ in Step 2
until $v$ becomes a $+$ vertex in $T$.
This is because {\bf Claw} and {\bf Cycle} are
impossible to occur due to 
Claim \ref{claim1}.
\end{itemize}
For $\bigstar - 2$, in order to derive a contradiction,
assume that a vertex $v$ violates $\bigstar - 2$ at some iteration, 
i.e. the algorithm parameters
associated to the even-sized alternating path $P$
and the cycle $C$ in the statement of $\bigstar-2$
are updated 
during the iteration.
Observe that the algorithm parameters are 
updated due to one of the following scenarios:
\begin{itemize}
\item[I.] The cycle $C$ is contracted. 
If $v$ is in $C$, $v$ no longer remains in $V^\dagger$ and 
contradicts to the assumption that
$v$ remains in $V^\dagger$.
If $v$ is not in $C$, $v$ becomes a $+$ vertex in tree $T$ 
after continuously adding edges of $P$ by {\bf Grow} 
without modifying parameter $y$ due to Claim \ref{claim1}.
This contradicts to the assumption of $\bigstar - 2$ that
$v$ is not in tree $T$ during the $t$-th iteration.\item[II.] A vertex in $C$ is added to tree $T$.
Then, {\bf Matching} occurs,
i.e. the new stage starts.
This contradicts to the assumption that 
$t^\dagger$, $t^\ddagger$-th iterations are in the same stage.
\item[III.] An edge in $P$ is added to tree $T$.
Then, there exists a vertex $u$ in $P$ that first became a $-$ vertex among vertices in $P$,
and it either 
(a) has an even-sized alternating path $P^{\prime}$ to $C$ consisting of tight edges or
(b) has an odd-sized alternating path $P^{\prime}$ to $v$ consisting of tight edges.
For (a),
the edges in $P^{\prime}$ are 
continuously added to $T$ 
without modifying parameter $y$ by Claim \ref{claim1} 
and {\bf Matching} occurs.
This contradicts to the assumption again. 
For (b), $P^{\prime}$ 
are added to $T$
without modifying parameter $y$ due to Claim \ref{claim1},
and $v$ is added to tree $T$ as a $+$ vertex.
This contradicts to the assumption of $\bigstar - 2$ that
$v$ is not in tree $T$ during the $t$-th iteration.
\end{itemize}
Therefore, $\bigstar$ holds.
One can observe that there exists $ t^\ast \in (t^\dagger,t^\ddagger) $ 
such that at the $t^\ast$-th iteration, $v$ last becomes a $+$ vertex before the $t^\ddagger$-th iteration, 
i.e. $v$ is not in tree $T$ during $t$-th iteration for $t^\ast<t<t^\ddagger$.
Then $v$ is connected to some cycle 
$C$ in $\mathcal O$ via an even length alternating path $P$ at $(t^\ast + 1)$-th iteration and
such path and cycle remains unchanged during $t$-th iteration for $t^\ast<t\leq t^\ddagger$ 
due to $\bigstar$.
This completes the proof of Claim \ref{claim2}.

\end{document}